\documentclass[journal,twoside,web]{ieeecolor}

\usepackage{generic}
\usepackage{amsmath,amssymb,amsfonts}
\usepackage{graphicx}
\graphicspath{{figures/}}
\usepackage{algorithm,algorithmic}
\usepackage[hidelinks]{hyperref} 
\usepackage{booktabs}
\usepackage{textcomp}
\usepackage{makecell}
\usepackage{multirow}
\usepackage{makecell}
\usepackage{bm}
\usepackage{caption}
\usepackage[style=ieee, citestyle=numeric-comp, sorting=none, sortcites=true]{biblatex}   
\AtBeginBibliography{\footnotesize} 
\AtEveryBibitem{\clearfield{doi}} 
\definecolor{eyefriendlygreen}{RGB}{0,150,80}
\def\BibTeX{{\rm B\kern-.05em{\sc i\kern-.025em b}\kern-.08em
    T\kern-.1667em\lower.7ex\hbox{E}\kern-.125emX}}
\allowdisplaybreaks[4]
\newtheorem{theorem}{Theorem}
\newtheorem{lemma}{Lemma}
\newtheorem{prop}{Proposition}
\newtheorem{rmk}{Remark} 

\begin{document}
\title{DeepONet-LSTM Neural Operator for Output Feedback Control of Reaction Diffusion PDEs}
\author{Jing Zhang, Jie Qi , Linglong Jiang
\thanks{The paper was mainly supported by the National Natural Science Foundation of China (62403305).}
\thanks{A preliminary version of this paper has been accepted for oral presentation at the 23rd IFAC World Congress.}
\thanks{Jing Zhang and Linglong Jiang are with the College of Information Engineering, Shanghai Maritime University, Shanghai  201306, China (e-mail: zhang.jing@shmtu.edu.cn).}
\thanks{Jie Qi is with the School of Information and Intelligence Science, Shanghai 201620, China (e-mail: jieqi@dhu.edu.cn). Jie Qi is the corresponding author.} 
}

\maketitle
 
\begin{abstract}
This paper presents a neural operator-based approach for the output feedback boundary stabilization of reaction diffusion PDEs. The classical output feedback backstepping design requires solving control and observer kernel equations for each reaction coefficient. To avoid computing these kernel functions, the output feedback control law is reformulated as a causal boundary operator that maps the reaction coefficient and the boundary measurement to the boundary control input.  A hybrid DeepONet-LSTM neural operator is proposed to approximate this causal operator, where DeepONet encodes the spatial coefficient and LSTM captures the temporal dependence of the measurement history. We analyze the Lipschitz continuity of the boundary operator and prove the closed-loop  practical stability with the learned controller. A modified loss is also introduced to improve the temporal regularity of the learned boundary input. Numerical results illustrate that the proposed neural operator controller effectively stabilizes the system.
\end{abstract}

\begin{IEEEkeywords}
Reaction diffusion PDE, Backstepping, Neural operator, Output feedback, DeepONet, LSTM.
\end{IEEEkeywords}

\section{Introduction}
\label{sec:introduction}
\IEEEPARstart{B}{ackstepping} output feedback control provides a systematic framework for stabilizing reaction diffusion PDEs from boundary measurements \cite{krstic2008boundary}. For spatially varying reaction coefficients, however, the controller and observer kernels must be recomputed for each coefficient profile. While recent neural-operator approaches approximate backstepping kernels or full-state feedback operators to reduce this offline computation \cite{bhan2024neural,krstic2024neural}, a direct approximation of the observer-based output feedback operator remains largely unexplored. This case is different from full-state feedback, since the control input is generated through the observer dynamics and hence depends causally on the past boundary measurements. 
Motivated by this observation, this paper develops a hybrid DeepONet-LSTM architecture to approximate the coefficient- and history-dependent output feedback operator in a single neural operator, thereby avoiding online kernel recomputation and separate approximation of controller and observer gains.

Deep learning methods have been widely investigated for modeling and prediction of PDE systems; see, e.g., the survey \cite{huang2025partial}. For temporal input-output modeling, LSTM networks provide a recurrent structure for capturing long-term dependencies \cite{hochreiter1997long} and have been used to predict responses of nonlinear dynamical systems \cite{feng2023predicting}. For distributed parameter systems, where the state evolves in both space and time, ConvLSTM-type architectures combine convolutional operations for local spatial correlations with recurrent hidden states for temporal evolution \cite{shi2015convolutional}, and have been applied to sequence-to-sequence prediction of PDE fields \cite{kakka2022sequence}. 

Neural operator methods provide a different framework by learning mappings between function spaces \cite{azizzadenesheli2024neural}. Representative architectures, such as DeepONet \cite{lu2021learning} and Fourier neural operators \cite{li2020fourier}, have been used to approximate PDE solution operators, including coefficient-to-solution and initial-condition-to-solution maps. For time-dependent input signals, however, standard neural operators do not by themselves encode temporal causality for time-dependent input signals. Causality-DeepONet addresses this issue by enforcing that the current response depends only on current and past inputs \cite{liu2024causality}. Another related direction combines neural operators with recurrent architectures; for example, \cite{michalowska2024neural} combines DeepONet and FNO with recurrent modules, including simple RNN, GRU, and LSTM, for long-time prediction of the Korteweg-de Vries dynamics.



Another line of work combines neural operator learning with model-based control design. A pioneering contribution \cite{bhan2024neural} uses neural operators to approximate backstepping kernels and further extends the approximation to full state feedback laws for first order hyperbolic PDE systems. 
A second representative example is \cite{krstic2024neural}, where neural operators are used to approximate both controller and observer gain functions for reaction diffusion PDEs with spatially varying reaction coefficients, which are the same class of systems considered in this paper.

Building on these ideas, neural operator based backstepping designs have been extended in several directions, including adaptive control \cite{bhan2025adaptive} and delay-compensated control  \cite{wang2025deep, zhang2025neural} of reaction diffusion PDEs, adaptive and gain-scheduled control of transport PDEs \cite{lamarque2025adaptive,lamarque2025gain}, and hyperbolic PIDEs with recycle and delay \cite{qi2024neural,qi2025neural}. These works show that neural operators can substantially reduce the computational cost of kernel or gain generation while retaining the closed-loop stability of backstepping designs.

Beyond backstepping, learning-based methods have been explored for PDE control and optimization, including optimal boundary control of parabolic PDEs \cite{sun2025learning}, neural-operator and differentiable predictive control formulations for PDE-constrained optimization \cite{sarkar2025learning}, safe boundary control with barrier functions \cite{hu2025safe}, operator-theoretic optimal control of infinite-dimensional systems \cite{feng2025optimal}, and physics-informed PDE-constrained optimal control \cite{barry2025physics}.

Existing neural-operator-based backstepping designs mainly approximate controller kernels, observer kernels, or instantaneous map from the current state to the current feedback input. In contrast, we directly approximate the observer-based output feedback operator. This operator depends not only on the spatially varying coefficient but also on the measurement history through the observer dynamics. It is therefore a causal history-dependent control operator, rather than a static kernel map or a current-state feedback map.

To capture these two types of dependence, we develop a hybrid DeepONet-LSTM architecture. The DeepONet component encodes the dependence on the spatial coefficient, while the LSTM component recursively encodes the causal measurement history. This enables direct generation of the boundary control input by a single neural operator, without separately approximating the controller and observer kernels or reconstructing the observer online. We prove that the output feedback operator is single-valued and Lipschitz continuous on compact admissible sets, which provides the basis for uniform approximation by the proposed neural operator. A modified training loss is also introduced to improve the temporal regularity of the learned boundary input by penalizing rapid variations and large amplitudes.

The practical stability analysis is carried out by viewing the neural approximation error as a boundary perturbation in the target system. Since this perturbation enters through a nonhomogeneous Dirichlet boundary condition, a direct Lyapunov energy estimate cannot be closed by a standard trace inequality. We instead use a heat kernel boundary convolution estimate for the target heat equation to show that bounded neural approximation errors lead to practical stability of the closed-loop system.

A preliminary version of this work was accepted for presentation at the 2026 IFAC conference \cite{zhang2026hybrid}. Compared with the conference version, this paper provides a complete causal-operator formulation and Lipschitz analysis of the output feedback law, refines the stability proof via a heat-kernel treatment of the boundary approximation error, and introduces a modified loss for improving the temporal regularity of the learned boundary input.

The main contributions of this paper are summarized as follows:
\begin{itemize}
\item A hybrid DeepONet-LSTM neural operator is proposed for observer-based output feedback boundary control of reaction diffusion PDEs. The proposed architecture directly approximates the causal output feedback law with a single neural operator, avoiding online kernel recomputation and separate approximation of the controller and observer gains. A modified training loss is also introduced to improve the temporal regularity and amplitude behavior of the learned boundary input. 

\item The backstepping output feedback law is formulated as a causal history-dependent operator. We prove that this operator is single-valued and Lipschitz continuous on compact admissible sets, which provides the basis for uniform neural operator approximation.

\item Practical stability under neural boundary control is established for bounded approximation errors. The neural approximation error is treated as a nonhomogeneous Dirichlet boundary perturbation in the backstepping target system, and a heat kernel boundary convolution estimate is used to derive the practical stability bound.

\end{itemize}

The remainder of this paper is organized as follows. Section~\ref{sec:backstepping_control} reviews the output feedback backstepping design for the reaction diffusion PDE. Section~\ref{sec:output_feedback_operator} defines the output feedback operator and proves its Lipschitz continuity. Section~\ref{sec:deeponet_lstm} presents the DeepONet-LSTM architecture.
Section~\ref{sec:stab} establishes practical stability under neural boundary control. Section~\ref{sec:simulation} gives numerical results and Section~\ref{sec:conc} concludes this paper.  
\vspace{-3mm}
\subsection*{Notation}
Let $\Omega=\{(x,y):0\le y\le x\le 1\}$ be the triangular kernel domain.
For a continuous function $f$ on $[0,1]$, we use
$\|f\|_\infty=\sup_{x\in[0,1]}|f(x)|$. For a time signal
$p\in C([0,T])$, we write
\begin{equation}
    \|p\|_{C([0,T])}
    :=
    \sup_{t\in[0,T]}|p(t)|.
\end{equation}
For a state $v(\cdot,t)\in L^2(0,1)$, its $L^2$ norm is denoted by
\begin{equation}
    \|v(\cdot,t)\|_{L^2}
    :=
    \left(\int_0^1 |v(x,t)|^2\,dx\right)^{1/2}.
\end{equation}
For a kernel $q$ defined on $\Omega$, we denote
\begin{equation}
    \|q\|_{C(\Omega)}
    :=
    \sup_{(x,y)\in\Omega}|q(x,y)|.
\end{equation}
When the boundary trace $q(1,\cdot)$ is involved, its $L^2(0,1)$ norm is
written as
\begin{equation}
    \|q(1,\cdot)\|_{L^2}
    :=
    \left(\int_0^1 |q(1,y)|^2\,dy\right)^{1/2}.
\end{equation}
All constants denoted by $C$ may vary from line to line, while constants
with subscripts, such as $C_T$ or $C_{\rm ker}$, are used to indicate their
main dependencies.

  
\section{Backstepping Control for a Linear Reaction Diffusion PDE}
\label{sec:backstepping_control}

This paper considers a linear reaction diffusion PDE with a spatially varying
reaction coefficient $\lambda\in C([0,1])$:
\begin{align}
    u_t(x,t)
    &=
    u_{xx}(x,t)+\lambda(x)u(x,t),
    \quad x\in(0,1),\ t>0, \label{eq:main_u}\\
    u(0,t)&=0,   \qquad   u(1,t) =U(t), \label{eq:bc}
\end{align}
where $u_t$ denotes the first-order time derivative and $u_{xx}$ denotes
the second-order spatial derivative. Under full-state feedback, the
classical backstepping boundary control law is given by
\begin{equation}
    U(t)   =
    K_\lambda u(\cdot,t)    :=
    \int_0^1 k[\lambda](1,y)u(y,t)\,dy,
    \label{eq:state_feedback_U}
\end{equation}
where $k[\lambda](x,y)$   is the backstepping control kernel defined on $\Omega$ and satisfies
\begin{align}
    k_{xx}(x,y)-k_{yy}(x,y)
    &=
    \lambda(y)k(x,y), \label{eq:kernel_main}\\
    k(x,0)&=0, \label{eq:kernel_bc1}\\
    k(x,x)
    &=
    -\frac{1}{2}\int_0^x \lambda(y)\,dy.
    \label{eq:kernel_bc2}
\end{align}
This is the standard boundary backstepping design for reaction diffusion
PDEs \cite{krstic2008boundary,smyshlyaev2004closed}.

To implement output feedback, we assume that the boundary flux
\begin{equation}
    p(t)=u_x(0,t)
    \label{eq:measurement_p}
\end{equation}
is available for measurement. The observer is designed according to the
backstepping observer framework \cite{smyshlyaev2005backstepping}:
\begin{align}
    \hat u_t(x,t)
    &=
    \hat u_{xx}(x,t)+\lambda(x)\hat u(x,t) \nonumber\\
    & \qquad \qquad\quad +g_1[\lambda](x)\big(p(t)-\hat u_x(0,t)\big),
    \label{eq:observer}\\
    \hat u(0,t)&=0, \qquad    
    \hat u(1,t) =U(t). \label{eq:observer_bc_right}
\end{align}
The observer gain is given by $
    g_1[\lambda](x)=-g[\lambda](x,0)$, 
where $g[\lambda](x,y)$ is the observer kernel. It is also defined on
$\Omega$ and satisfies
\begin{align}
    g_{xx}(x,y)-g_{yy}(x,y)
    &=
    -\lambda(x)g(x,y), \label{eq:observer_kernel_main}\\
    g(1,y)&=0, \label{eq:observer_kernel_bc1}\\
    g(x,x)
    &=
    -\frac{1}{2}\int_x^1 \lambda(y)\,dy.
    \label{eq:observer_kernel_bc2}
\end{align}
The resulting observer-based output feedback control law is
\begin{equation}
    U(t)
    =
    K_\lambda \hat u(\cdot,t)
    :=
    \int_0^1 k[\lambda](1,y)\hat u(y,t)\,dy.
    \label{eq:output_feedback_U}
\end{equation}

\section{Output Feedback Operator}
\label{sec:output_feedback_operator}
Existing deep operator learning approaches for backstepping control have mainly focused on approximating the control and observer kernels \cite{krstic2024neural}, or the full-state feedback control operator \cite{bhan2024neural}. In contrast, the output feedback operator induced by a backstepping observer has not been directly addressed.
Unlike full-state feedback, the output feedback law is not an instantaneous mapping from the current measurement $p(t)$ to the current boundary input $U(t)$. Since $U(t)$ in \eqref{eq:output_feedback_U} is computed from the observer state $\hat u(\cdot,t)$, and $\hat u(\cdot,t)$ evolves according to the dynamic observer \eqref{eq:observer}-\eqref{eq:observer_bc_right}, the control input depends on the measurement history rather than only on the current value $p(t)$. In particular, the observer state is generated by dynamically filtering the  signal $p(s)-\hat u_x(0,s)$ over $0\le s\le t$. 

To make this dependence explicit, we next describe the observer in an abstract evolution form.
\begin{equation}
\frac{d}{dt}\hat u(t)
=
\mathcal A_\lambda \hat u(t)
+
\mathcal G_\lambda p(t),
\label{eq:abstract_observer}
\end{equation}
where $\mathcal A_\lambda$ denotes the observer generator, including the boundary feedback and output injection terms, and $\mathcal G_\lambda$ represents the measurement injection operator. Then the mild solution can be formally expressed as
\begin{equation}
\hat u(t)
=
e^{\mathcal A_\lambda t}\hat u_0
+
\int_0^t
e^{\mathcal A_\lambda(t-s)}
\mathcal G_\lambda p(s) ds.
\label{eq:observer_mild_solution}
\end{equation}
This representation shows explicitly that $\hat u(\cdot,t)$ depends on the entire past measurement trajectory $p_{[0,t]}
    :=
    \{p(s):0\le s\le t\}$, rather than only on the current value $p(t)$. Then, for each reaction coefficient $\lambda(x)$, the controller induces a causal output feedback operator
\begin{equation}
U(t)
=
\mathcal T[\lambda,p_{[0,t]}](t),
\end{equation}
where 
\begin{equation}
    \mathcal T[\lambda,p_{[0,t]}](t)
    :=
    \int_0^1 k[\lambda](1,y)\hat u_{\lambda,p}(y,t)\,dy.
    \label{eq:full_history_operator_def}
\end{equation}
\subsection{Single-valuedness of the output feedback operator}
\begin{prop}[Single-valuedness of the operator]
\label{prop:full_history_single_valued}
Let $T>0$. Assume that $\lambda\in C([0,1])$, $p\in C([0,T])$, and the
observer initial condition $\hat u_0\in L^2(0,1)$ is prescribed and fixed.
Then, for every admissible pair $(\lambda,p_{[0,t]})$, the boundary input
$U(t)$ is uniquely determined, that is,
$\mathcal T$ defined in \eqref{eq:full_history_operator_def} is a
single-valued operator.
    \end{prop}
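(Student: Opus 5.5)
The argument has three steps: the kernels $k[\lambda]$ and $g[\lambda]$ are uniquely determined by $\lambda$; for fixed $(\lambda,p)$ the closed-loop observer has a unique mild solution; and $U(t)$ is then a bounded linear functional of that solution.

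\textbf{Step 1: the kernels.} For $\lambda\in C([0,1])$, the kernel equations \eqref{eq:kernel_main}--\eqref{eq:kernel_bc2} admit a unique continuous solution $k[\lambda]$ on $\Omega$. This is the standard result from \cite{krstic2008boundary,smyshlyaev2004closed}, proved as follows:
\begin{itemize}
\item change variables to $\xi=x+y$, $\eta=x-y$;
\item convert the hyperbolic problem into an integral equation;
\item solve it by successive approximations, which give the bound $|k^{n}|\le C^n(\xi+\eta)^n/n!$, so the series converges uniformly.
\end{itemize}
Uniqueness follows by applying the same bound to the difference of two solutions. The observer kernel \eqref{eq:observer_kernel_main}--\eqref{eq:observer_kernel_bc2} is handled identically via the reflection $(x,y)\mapsto(1-y,1-x)$. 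Hence $k[\lambda](1,\cdot)\in C([0,1])\subset L^2(0,1)$ and $g_1[\lambda]\in C([0,1])$ are well defined.

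\textbf{Step 2: the observer equation.} Since $U(t)=K_\lambda\hat u(\cdot,t)$, the observer \eqref{eq:observer}--\eqref{eq:observer_bc_right} is a closed-loop equation in $\hat u$ alone, driven by the input $p$. It contains a nonlocal boundary condition and the unbounded output injection $g_1\hat u_x(0,t)$, so I would not treat it directly. Instead I apply the invertible Volterra transformation
\[
\hat w=\hat u-\int_0^x k[\lambda](x,y)\hat u(y)\,dy .
\]
By Step 1, this transformation and its inverse (with continuous kernel $l[\lambda]$) are bounded on $L^2$. In the new variable the observer becomes
\[
\hat w_t=\hat w_{xx}+\tilde g_1(x)\big(p(t)-\hat w_x(0,t)\big),\qquad \hat w(0,t)=\hat w(1,t)=0,
\]
with $\tilde g_1=g_1-\int_0^x k(x,y)g_1(y)\,dy$, using $\hat u_x(0)=\hat w_x(0)$.

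To remove the unbounded term, I plan to compose with the observer backstepping transformation, i.e. use the error-system decomposition. Since $p$ is merely an input, the term $\tilde g_1 p$ is a bounded continuous forcing. What remains is the injection $-\tilde g_1\hat w_x(0,t)$. One route is to view $\hat w_x(0)$ as a $\Delta$-admissible observation for the Dirichlet heat semigroup, so the perturbed generator still generates a $C_0$-semigroup (Weiss/Salamon perturbation). The alternative, which I prefer, is a second Volterra transform $\hat w=\check w-\int_x^1 g(x,y)\check w(y)\,dy$. It turns the closed-loop observer into a heat equation with homogeneous Dirichlet boundary conditions plus a bounded source term $\check g(x)p(t)$.

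For that forced heat equation, existence and uniqueness of a mild solution in $C([0,T];L^2)$ is immediate from the variation-of-constants formula \eqref{eq:observer_mild_solution}, since the heat semigroup is analytic and $p\in C([0,T])$. Uniqueness for any $\hat u_0\in L^2$ then transfers back through the bounded invertible transformations. The mild-solution formula also shows causality: $\hat u(\cdot,t)$ depends only on $p_{[0,t]}$.

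\textbf{Step 3: conclusion.} $U(t)=\int_0^1 k[\lambda](1,y)\hat u_{\lambda,p}(y,t)\,dy$ is a bounded linear functional of a uniquely determined element of $L^2$. Therefore $\mathcal T[\lambda,p_{[0,t]}](t)$ is single-valued.

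\textbf{Main obstacle.} The delicate point is Step 2: making rigorous that the closed-loop observer, with its nonlocal boundary input and point-trace injection, is well posed for merely $L^2$ initial data. The double-transformation trick reduces it to a standard forced heat equation, and I expect that verifying the algebra of this composition is the main work.
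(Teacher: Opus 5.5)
Your Steps 1 and 3 match the paper. The paper's proof does no more than that: it cites the $L^2$ well-posedness of the observer from \cite{smyshlyaev2005backstepping}. The problem is the argument you commit to in Step 2. After your first transformation, the $\hat w$-system has generator $A_w\phi=\phi''-\tilde g_1\phi'(0)$ on $H^2\cap H^1_0$. No bounded invertible change of variables, Volterra or not, can turn this into a Dirichlet heat equation plus a source $\check g(x)p(t)$. Taking $p\equiv 0$, such a map would make $A_w$ similar to the Dirichlet Laplacian, so every $-n^2\pi^2$ would be an eigenvalue of $A_w$. But an eigenfunction for $\mu=-n^2\pi^2$ must have $\phi'(0)\neq 0$, and it satisfies $\phi(1)=\frac{(-1)^{n+1}}{n\pi}\phi'(0)\int_0^1\tilde g_1(s)\sin(n\pi s)\,ds$. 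So $-n^2\pi^2$ is an eigenvalue only if the $n$th sine coefficient of $\tilde g_1$ vanishes. That fails for some $n$ whenever $g_1\neq 0$, for example for any nonzero constant $\lambda$. The observer kernel produces a heat-equation target only for the error $\tilde u=u-\hat u$, and that error exists only when $p=u_x(0,t)$ is the output of an actual plant. For an arbitrary $p\in C([0,T])$, the observer alone is the analogue of the compensator matrix $A+BK-LC$, and the separation principle says nothing about its spectrum. Also, $g$ is defined only for $y\le x$, so $\int_x^1 g(x,y)\check w(y)\,dy$ is not meaningful.

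Your other route, perturbing the Dirichlet heat semigroup, is the one that works, but not through admissibility in the usual $L^2$ sense. The map $\phi\mapsto\phi'(0)$ is not an admissible observation for the Dirichlet heat semigroup $e^{t\Delta_D}$ on $L^2(0,1)$: the Carleson-measure test fails, since $\sum_{n^2\pi^2\le h}2n^2\pi^2$ grows like $h^{3/2}$. What does hold is $|\partial_x(e^{t\Delta_D}\phi)(0)|\le Ct^{-3/4}\|\phi\|_{L^2}$, the same $t^{-3/4}$ rate as in the proof of Lemma~\ref{lem:stab_w}, and this is integrable near $0$. The Miyadera--Voigt theorem then shows that $A_w$ generates a $C_0$-semigroup. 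It is even analytic, since the perturbation is $(-\Delta_D)^{\gamma}$-bounded for $\gamma\in(3/4,1)$. Variation of constants with the continuous forcing $\tilde g_1 p$ gives the unique mild solution in $C([0,T];L^2)$. You can then pull it back through the inverse of your first transformation exactly as you planned. Alternatively, cite the well-posedness result directly, as the paper does.
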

\begin{proof}
For a fixed $\lambda$, the backstepping kernel equations uniquely determine
the control kernel $k[\lambda]$ and the observer gain $g_1[\lambda]$.
By the well-posedness of the
backstepping observer system in $L^2(0,1)$
\cite{smyshlyaev2005backstepping}, it admits a unique
solution $\hat u_{\lambda,p}(\cdot,t)$ on $[0,T]$ for given $p_{[0,t]}$ and a fixed initial
condition $\hat u_0$ . Therefore, $U(t)$ defined by \eqref{eq:output_feedback_U} 
is uniquely determined for each $t\in[0,T]$. This proves the proposition.   
\end{proof}

\begin{rmk}
The plant initial condition $u(\cdot,0)$ is not explicitly included in
the operator input. Its effect is reflected in the measured output history
$p_{[0,t]}$. Therefore, for the observer-induced output feedback law, the
mapping $(\lambda,p_{[0,t]})\mapsto U(t)$ can cover different plant
initial conditions.
\end{rmk}

\subsection{Continuity of the output feedback operator}
\label{subsec:continuity_output_feedback_operator}

Let
\begin{equation}
    \mathcal K_\lambda
    :=
    \{\lambda\in C([0,1]):\|\lambda\|_\infty\le B_\lambda\},
\end{equation}
and
\begin{equation}
    \mathcal K_p
    :=
    \{p\in C([0,T]):\|p\|_{C([0,T])}\le B_p\},
\end{equation}
where $B_\lambda>0$ and $B_p>0$ are constants.

\begin{lemma}[Lipschitz dependence of the kernels on $\lambda$]
\label{lem:kernel_lipschitz}
For every $B_\lambda>0$, there exists a constant
$C_{\rm ker}=C_{\rm ker}(B_\lambda)>0$ such that, for all
$\lambda_1,\lambda_2\in\mathcal K_\lambda$,
\begin{align}
    \|k[\lambda_1]-k[\lambda_2]\|_{C(\Omega)}
    &\le
    C_{\rm ker}\|\lambda_1-\lambda_2\|_\infty,
    \label{eq:k_lipschitz}\\
    \|g[\lambda_1]-g[\lambda_2]\|_{C(\Omega)}
    &\le
    C_{\rm ker}\|\lambda_1-\lambda_2\|_\infty.
    \label{eq:g_lipschitz}
\end{align}
Consequently, since $g_1[\lambda](x)=-g[\lambda](x,0)$,
\begin{equation}
    \|g_1[\lambda_1]-g_1[\lambda_2]\|_{\infty}
    \le
    C_{\rm ker}\|\lambda_1-\lambda_2\|_\infty .
    \label{eq:g1_lipschitz}
\end{equation}
Moreover, there exists $C_{\rm ker}^0=C_{\rm ker}^0(B_\lambda)>0$ such that
\begin{equation}
    \|k[\lambda](1,\cdot)\|_{L^2}
    +
    \|g_1[\lambda]\|_{\infty}
    \le C_{\rm ker}^0,
    \qquad \forall \lambda\in\mathcal K_\lambda .
    \label{eq:kernel_uniform_bound}
\end{equation}
\end{lemma}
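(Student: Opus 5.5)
The plan is to use the standard characteristic-coordinate reformulation of the kernel PDEs as Volterra-type integral equations, solve them by successive approximations, and exploit linearity of the integral operator in $\lambda$ to get the Lipschitz bounds. For the control kernel, introduce $\xi=x+y$, $\eta=x-y$ and $G(\xi,\eta)=k(x,y)$. Then \eqref{eq:kernel_main}--\eqref{eq:kernel_bc2} become the integral equation
\begin{equation*}
G(\xi,\eta)=-\frac14\int_\eta^\xi \lambda\!\left(\tfrac{\tau}{2}\right)d\tau
+\frac14\int_\eta^\xi\!\int_0^\eta \lambda\!\left(\tfrac{\tau-s}{2}\right)G(\tau,s)\,ds\,d\tau ,
\end{equation*}
on the transformed triangle. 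Write this as $G=F[\lambda]+\mathcal L[\lambda]G$, where $F[\lambda]$ and $\mathcal L[\lambda]$ are linear in $\lambda$. The successive approximation series $G=\sum_{n\ge0}G_n$ satisfies the classical estimate
\begin{equation*}
|G_n(\xi,\eta)|\le \frac{B_\lambda^{n+1}(\xi+\eta)^n}{4^n\,n!}\cdot C.
\end{equation*}
This gives existence and uniqueness, and $\|k[\lambda]\|_{C(\Omega)}\le M(B_\lambda)$ with $M$ of the form $B_\lambda e^{2B_\lambda}$ up to constants \cite{smyshlyaev2004closed}. The observer kernel is handled identically after the substitution $\check g(x,y)=g(1-y,1-x)$. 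This maps \eqref{eq:observer_kernel_main}--\eqref{eq:observer_kernel_bc2} into a kernel problem of the same type as \eqref{eq:kernel_main}--\eqref{eq:kernel_bc2}, with $\lambda$ replaced by $\lambda(1-\cdot)$, which has the same sup norm. So it suffices to treat $k$.

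For the Lipschitz estimate, set $\delta=\lambda_1-\lambda_2$ and $\tilde G=G_1-G_2$, and subtract the two integral equations. Using linearity in $\lambda$,
\begin{equation*}
\tilde G=F[\delta]+\mathcal L[\delta]G_2+\mathcal L[\lambda_1]\tilde G .
\end{equation*}
The forcing $F[\delta]+\mathcal L[\delta]G_2$ is bounded in sup norm by $C(1+M(B_\lambda))\|\delta\|_\infty$, using the uniform bound on $G_2$ from the first step. Iterating $\mathcal L[\lambda_1]$ with the same factorial estimate as above, so that the Neumann series converges with a bound depending only on $B_\lambda$, gives
\begin{equation*}
\|\tilde G\|_\infty\le C(1+M)e^{2B_\lambda}\|\delta\|_\infty .
\end{equation*}
Since the change of variables is a bijection of triangles, this yields \eqref{eq:k_lipschitz}, and the same argument gives \eqref{eq:g_lipschitz}.

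The remaining claims follow directly. Evaluating \eqref{eq:g_lipschitz} at $y=0$ gives \eqref{eq:g1_lipschitz}. Moreover, $\|k[\lambda](1,\cdot)\|_{L^2}\le\|k[\lambda]\|_{C(\Omega)}$ and $\|g_1[\lambda]\|_\infty\le\|g[\lambda]\|_{C(\Omega)}$, so the uniform bounds from the first step give \eqref{eq:kernel_uniform_bound} with $C^0_{\rm ker}=2M(B_\lambda)$.

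The main obstacle is making the factorial bound on the iterates uniform in $\lambda\in\mathcal K_\lambda$ when $\lambda$ is only continuous. Here one must verify that the successive approximations converge in $C$ without any differentiability of $\lambda$, so that $G$ is a continuous solution of the integral equation, which serves as the meaning of the kernel PDE. Tracking that all constants depend only on $B_\lambda$ is then routine.
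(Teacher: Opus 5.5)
Your proposal is correct and follows essentially the same route as the paper: you rewrite the kernel equations as Volterra-type integral equations in characteristic coordinates, subtract the equations for $\lambda_1$ and $\lambda_2$, and close the estimate with a successive-approximation (Volterra--Gronwall) bound. You make explicit what the paper only cites, namely the factorial estimate, the reduction of $g$ to $k$ via $\check g(x,y)=g(1-y,1-x)$ with $\lambda(1-\cdot)$, and the uniform bound obtained directly from the series rather than by setting $\lambda_2=0$.
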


\begin{proof}
The control kernel $k[\lambda]$ and the observer kernel $g[\lambda]$ satisfy linear hyperbolic kernel equations on the triangular domain $\Omega$. Their solutions admit equivalent integral representations and are known to be uniformly bounded \cite{smyshlyaev2004closed, smyshlyaev2005backstepping}. 
Applying the Volterra-Gronwall estimate to the difference of the two integral equations associated with $\lambda_1$ and $\lambda_2$, one obtains \eqref{eq:k_lipschitz}-\eqref{eq:g_lipschitz}. The estimate \eqref{eq:g1_lipschitz} follows directly from $g_1[\lambda](x)=-g[\lambda](x,0)$. The uniform boundedness \eqref{eq:kernel_uniform_bound} follows by taking $\lambda_2=0$ and using the boundedness of $\mathcal K_\lambda$. 
\end{proof}


\begin{lemma}[Lipschitz dependence of the observer state]
\label{lem:observer_lipschitz}
Let $T>0$. For any $B_\lambda>0$, $B_p>0$, and fixed
$\hat u_0\in L^2(0,1)$, there exists a constant
$C_{\rm obs}=C_{\rm obs}(T,B_\lambda,B_p,\|\hat u_0\|_{L^2})>0$ such that,
for any $
    \lambda_1,\lambda_2\in\mathcal K_\lambda$, and
    $p_1,p_2\in\mathcal K_p$, 
the corresponding observer states  $\hat u_i:=\hat u_{\lambda_i,p_i}$, $i=1,2$,
satisfy
\begin{align}
    \|\hat u_1-\hat u_2\|&_{C([0,T];L^2)}
    \le
    \nonumber\\
    &C_{\rm obs}
    \left(
    \|\lambda_1-\lambda_2\|_\infty
    +
    \|p_1-p_2\|_{C([0,T])}
    \right).
    \label{eq:observer_lipschitz_final}
\end{align}
In addition, there exists $C_{\rm obs}^0>0$ such that
\begin{equation}
    \|\hat u_i\|_{C([0,T];L^2)}
    \le
    C_{\rm obs}^0,
    \qquad i=1,2.
    \label{eq:observer_uniform_bound}
\end{equation}
\end{lemma}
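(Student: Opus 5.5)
Following the backstepping observer analysis of \cite{smyshlyaev2005backstepping}, I would not work with the observer \eqref{eq:observer}--\eqref{eq:observer_bc_right} directly, because it contains the unbounded output injection $g_1[\lambda](x)\hat u_x(0,t)$ and the boundary input $U(t)=K_\lambda\hat u(\cdot,t)$. Instead I would move to backstepping coordinates in which the generator becomes a plain heat operator. Concretely, I would introduce
\begin{equation*}
\hat w=(I-\mathcal K_\lambda)\hat u,\qquad (\mathcal K_\lambda v)(x)=\int_0^x k[\lambda](x,y)v(y)\,dy .
\end{equation*}
With this choice the controlled observer becomes
\begin{equation*}
\hat w_t=\hat w_{xx}+\tilde g_1[\lambda](x)\bigl(p(t)-\hat w_x(0,t)\bigr),\qquad \hat w(0,t)=\hat w(1,t)=0 .
\end{equation*}
Here $\tilde g_1=(I-\mathcal K_\lambda)g_1$, and I use $\hat u_x(0,t)=\hat w_x(0,t)$ since $k(0,0)=0$. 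The transformation is a bounded invertible Volterra operator on $L^2$. By Lemma~\ref{lem:kernel_lipschitz}, the transformation, its inverse (with kernel $l[\lambda]$, handled the same way), and $\tilde g_1$ are all uniformly bounded and Lipschitz in $\lambda$ on $\mathcal K_\lambda$. Hence $L^2$ estimates for $\hat w$ transfer to $\hat u$ with constants depending only on $B_\lambda$.

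The remaining difficulty is the term $\tilde g_1\hat w_x(0,t)$, which is still unbounded. To remove it, I would apply a second transformation: the observer-error backstepping map with kernel $g[\lambda]$, applied in the form $\hat w=\hat z+\int_x^1(\cdot)$, or equivalently write the observer as plant minus error. Cleaner still, I would write $\hat u=u_{\rm ref}-\tilde u$, where $\tilde u$ solves the error system. After the transform with kernel $g[\lambda]$, the error system becomes the heat equation $\tilde w_t=\tilde w_{xx}$ with Dirichlet conditions, which is exponentially stable.

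Alternatively, I could work directly with the mild formulation \eqref{eq:observer_mild_solution}. In the composite coordinates, the observer evolves under $\partial_{xx}$ with Dirichlet conditions plus a bounded input term $\tilde h[\lambda](x)p(t)$. That is, the cascade of the two backstepping maps turns $\mathcal A_\lambda$ into the Dirichlet Laplacian $A_0$, and turns $\mathcal G_\lambda p$ into $h[\lambda](\cdot)p(t)$ with $h[\lambda]$ a bounded function, Lipschitz in $\lambda$. In those coordinates the solution is
\begin{equation*}
\hat z(t)=e^{A_0t}\hat z_0[\lambda]+\int_0^t e^{A_0(t-s)}h[\lambda]p(s)\,ds .
\end{equation*}
The uniform bound \eqref{eq:observer_uniform_bound} follows immediately from $\|e^{A_0t}\|\le1$:
\begin{equation*}
\|\hat z(t)\|\le\|\hat z_0\|+T\|h[\lambda]\|_{L^2}B_p .
\end{equation*}
I would then map back to $\hat u$ using the uniform bounds of Lemma~\ref{lem:kernel_lipschitz}.

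For the Lipschitz estimate \eqref{eq:observer_lipschitz_final}, I would subtract the two representations and split the difference into three parts.
\begin{itemize}
\item[(i)] The initial-data part, $e^{A_0t}(\hat z_0[\lambda_1]-\hat z_0[\lambda_2])$, is bounded by $C\|\lambda_1-\lambda_2\|_\infty\|\hat u_0\|_{L^2}$, because the transformation depends Lipschitz-continuously on the kernels.
\item[(ii)] The term $\int_0^te^{A_0(t-s)}(h[\lambda_1]-h[\lambda_2])p_1\,ds$ is bounded by $TB_pC\|\lambda_1-\lambda_2\|_\infty$.
\item[(iii)] The term $\int_0^te^{A_0(t-s)}h[\lambda_2](p_1-p_2)\,ds$ is bounded by $TC^0\|p_1-p_2\|_{C([0,T])}$.
\end{itemize}
Finally, I would return to $\hat u_i$ with the $\lambda_i$-dependent inverse maps and write
\begin{equation*}
\hat u_1-\hat u_2=(T_1^{-1}-T_2^{-1})\hat z_1+T_2^{-1}(\hat z_1-\hat z_2).
\end{equation*}
The first term is controlled by the kernel Lipschitz estimate times the uniform bound on $\hat z_1$. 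This gives $C_{\rm obs}$ depending on $T$, $B_\lambda$, $B_p$ and $\|\hat u_0\|_{L^2}$.

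I expect the main obstacle to be verifying rigorously that the composite transformation removes the boundary-trace injection $\hat w_x(0,t)$ and leaves only a bounded input operator $h[\lambda]p$. This requires the observer kernel identity, namely that $g$ solves \eqref{eq:observer_kernel_main}--\eqref{eq:observer_kernel_bc2}, and showing that $h[\lambda]$ is Lipschitz in $\lambda$. If that step fails, I would fall back on a Lyapunov/energy estimate for the difference $\hat w_1-\hat w_2$. In that approach, $\tilde g_1\hat w_x(0,t)$ would be handled through the parabolic smoothing bound $\int_0^T|\hat w_x(0,t)|^2dt\lesssim$ data, obtained from an $H^1$ energy estimate, together with Young and Gronwall inequalities.
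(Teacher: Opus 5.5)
\textbf{There is a genuine gap: the observer driven by an arbitrary $p$ cannot be turned into the Dirichlet heat semigroup with a bounded input, and your fallback estimate fails for $L^2$ initial data.}

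Your first step is correct: after the $k[\lambda]$-map the observer becomes $\hat w_t=\hat w_{xx}+\tilde g_1[\lambda](p-\hat w_x(0,t))$ with Dirichlet conditions. The gap is the next claim, that a second backstepping map (or the decomposition $\hat u=u_{\rm ref}-\tilde u$) turns the observer into $\hat z_t=A_0\hat z+h[\lambda]p$. This is false, not merely unverified.

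The separation structure behind the observer-error transformation needs $p=u_x(0,t)$ to be produced by the plant in closed loop. Only then is $p-\hat u_x(0,t)=\tilde u_x(0,t)$ driven by an autonomous error system. In $\mathcal T[\lambda,p]$, however, $p$ is an arbitrary element of $\mathcal K_p$. In general there is no plant trajectory with $u_{{\rm ref},x}(0,\cdot)=p$ (this is a sideways heat problem), let alone one that also satisfies $u_{\rm ref}(1,\cdot)=U$.

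The generator of the observer driven by $p$ is $\hat w\mapsto\hat w_{xx}-\tilde g_1\hat w_x(0)$ on $H^2\cap H^1_0$, the PDE analogue of $A+BK-LC$ for the controller alone. Its eigenvalues are the roots of $\sinh\sqrt\sigma+\int_0^1\sinh(\sqrt\sigma(1-y))\tilde g_1(y)\,dy=0$. For example, with $\tilde g_1\equiv c\neq0$, $\sigma=-\pi^2$ is not a root. Since bounded similarities preserve the spectrum, no change of coordinates yields $A_0$. Your formula for $\hat z(t)$ and the split (i)--(iii) therefore omit exactly the unbounded feedback term $-\tilde g_1\hat w_x(0,t)$ that carries the difficulty.

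Your fallback does not close either, because the bound $\int_0^T|\hat w_x(0,t)|^2dt\le C\|\hat w_0\|_{L^2}^2$ is false for $L^2$ data. Already for the Dirichlet heat semigroup, $\hat w_0=\sum_n a_n\sqrt2\sin(n\pi x)$ gives $\int_0^\infty|\hat w_x(0,t)|^2dt=2\sum_{n,m}a_na_m\,nm/(n^2+m^2)$, which is unbounded on $\ell^2$. An $H^1$ estimate would need $\hat u_0\in H^1$. It would also need an $H^1_0$ Lipschitz bound on $(T_1-T_2)\hat u_0$, which Lemma~\ref{lem:kernel_lipschitz} does not provide.

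The paper makes no similarity claim. It keeps the $\lambda_1$-observer operator, feedback and injection included, as the generator of the difference system and treats all mismatches as inputs. Its treatment of $(g_1[\lambda_1]-g_1[\lambda_2])\hat u_{2x}(0,t)$ as a distributed input also tacitly needs a trace bound.

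The missing ingredient is the pointwise estimate $|\partial_x(e^{A_0t}f)(0)|\le C_Pt^{-3/4}\|f\|_{L^2}$. It comes from the same series as $\|P(\cdot,t)\|_{L^2}$ in Lemma~\ref{lem:stab_w}. With $\theta(t):=\hat w_x(0,t)$, the mild formulation in your $\hat w$-coordinates gives
\[ \theta(t)=\partial_x\big(e^{A_0t}\hat w_0\big)(0)+\int_0^t\partial_x\big(e^{A_0(t-s)}\tilde g_1\big)(0)\,\big(p(s)-\theta(s)\big)\,ds , \]
whose kernel is $O((t-s)^{-3/4})$.

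\begin{itemize}
\item A weakly singular Gronwall argument then gives $|\theta(t)|\le Ct^{-3/4}$ uniformly on $\mathcal K_\lambda\times\mathcal K_p$. This yields the uniform bound through $\|\hat w(t)\|\le\|\hat w_0\|+\|\tilde g_1\|\int_0^t(B_p+|\theta|)\,ds$.
\item For the Lipschitz estimate, repeat the argument for $\theta_1-\theta_2$. The gain mismatch is handled using $\int_0^t(t-s)^{-3/4}s^{-3/4}ds=c\,t^{-1/2}$.
\item Finally, apply your splitting with $\hat w_i$ in place of $\hat z_i$.
\end{itemize}
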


\begin{proof}
Let $e(x,t):=\hat u_1(x,t)-\hat u_2(x,t)$. Subtracting the two observer systems gives
\begin{align}
    e_t
    &=
    e_{xx}+\lambda_1(x)e
    -g_1[\lambda_1](x)e_x(0,t)
    \notag\\&\quad
    +g_1[\lambda_1](x)(p_1(t)-p_2(t))
    \notag\\
    &\quad
    +
    \big(g_1[\lambda_1](x)-g_1[\lambda_2](x)\big)
    \big(p_2(t)-\hat u_{2x}(0,t)\big)
     \notag\\&\quad
    +
    \big(\lambda_1(x)-\lambda_2(x)\big)\hat u_2(x,t),
    \label{eq:e_observer_difference}
\end{align}
with initial and boundary conditions $e(x,0)=0$, $e(0,t)=0$, and
\begin{align}
e(1,t)
&=
\int_0^1 k[\lambda_1](1,y)e(y,t) dy
\notag\\\
&+
\int_0^1
\big(k[\lambda_1](1,y)-k[\lambda_2](1,y)\big)
\hat u_2(y,t) dy .
\label{eq:e_boundary}
\end{align}

The homogeneous part of \eqref{eq:e_observer_difference}, including the boundary feedback \eqref{eq:e_boundary} involving $k[\lambda_1]$, defines a linear parabolic closed-loop generator on $L^2(0,1)$. The remaining terms are distributed inputs 
\begin{equation}
    g_1[\lambda_1](p_1-p_2),
   ~
    (g_1[\lambda_1]-g_1[\lambda_2])(p_2-\hat u_{2x}(0,t)),
   ~
    (\lambda_1-\lambda_2)\hat u_2, \nonumber
\end{equation} 
and a boundary perturbation in \eqref{eq:e_boundary}. Hence, by the continuous-dependence estimate for linear parabolic systems with admissible boundary inputs \cite{curtain1995infinite}, the mild solution satisfies
\begin{align}
    \|e\|_{C([0,T];L^2)}
    &\le
    C_T\Big(
    \|p_1-p_2\|_{C([0,T])}
    +
    \|g_1[\lambda_1]-g_1[\lambda_2]\|_{\infty}
    \notag\\
    +~&
    \|\lambda_1-\lambda_2\|_\infty
    +
    \|k[\lambda_1]-k[\lambda_2]\|_{C(\Omega)}
    \Big),
    \label{eq:e_energy_estimate}
\end{align}
where $C_T>0$ depends on $T$, $B_\lambda$, $B_p$, and
$\|\hat u_0\|_{L^2}$, but not on the particular choices of
$\lambda_i$ and $p_i$.

Using Lemma~\ref{lem:kernel_lipschitz}, we have 
\begin{equation}
    \|e\|_{C([0,T];L^2)}
    \le
    C_{\rm obs}
    \left(
    \|\lambda_1-\lambda_2\|_\infty
    +
    \|p_1-p_2\|_{C([0,T])}
    \right), \nonumber
\end{equation}
which proves \eqref{eq:observer_lipschitz_final}.

The uniform bound \eqref{eq:observer_uniform_bound} follows similarly from
the well-posedness estimate for the observer system, the boundedness of
$\lambda$, $p$, $k[\lambda]$, and $g_1[\lambda]$, and the fixed
initial condition $\hat u_0$.
\end{proof}

\begin{prop}[Lipschitz continuity of the output operator]
\label{prop:T_lipschitz}
Let $T>0$, $B_\lambda>0$, and $B_p>0$. Under the assumptions of
Lemmas~\ref{lem:kernel_lipschitz} and \ref{lem:observer_lipschitz}, the
output feedback operator \eqref{eq:full_history_operator_def} is locally Lipschitz continuous, i.e., there exists a constant
\[
    C_{\mathcal T}
    =
    C_{\mathcal T}(T,B_\lambda,B_p,\|\hat u_0\|_{L^2})>0
\]
such that, for any 
    $\lambda_1,\lambda_2\in\mathcal K_\lambda$, and
    $p_1,p_2\in\mathcal K_p$, 
one has
\begin{align}
    &\|\mathcal T[\lambda_1,p_1]
    -
    \mathcal T[\lambda_2,p_2]\|_{C([0,T])}
   \nonumber 
   \\\le&
    C_{\mathcal T}
    \left(
    \|\lambda_1-\lambda_2\|_\infty
    +
    \|p_1-p_2\|_{C([0,T])}
    \right).
    \label{eq:T_lipschitz_estimate}
\end{align}
\end{prop}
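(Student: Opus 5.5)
The estimate \eqref{eq:T_lipschitz_estimate} will follow from a pointwise-in-time bound: an add-and-subtract decomposition of the integral defining $\mathcal T$, followed by Lemmas~\ref{lem:kernel_lipschitz} and \ref{lem:observer_lipschitz}. Fix $t\in[0,T]$ and write $\hat u_i:=\hat u_{\lambda_i,p_i}$ as in Lemma~\ref{lem:observer_lipschitz}. We split
\begin{align}
    \mathcal T[\lambda_1,p_1](t)-\mathcal T[\lambda_2,p_2](t)
    &=\int_0^1 \big(k[\lambda_1](1,y)-k[\lambda_2](1,y)\big)\hat u_1(y,t)\,dy
    \notag\\
    &\quad+\int_0^1 k[\lambda_2](1,y)\big(\hat u_1(y,t)-\hat u_2(y,t)\big)\,dy ,
    \notag
\end{align}
so that the first term isolates the change in the kernel and the second isolates the change in the observer state.

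For the first term, we apply the Cauchy--Schwarz inequality on $(0,1)$. Since the interval has unit length, $\|k[\lambda_1](1,\cdot)-k[\lambda_2](1,\cdot)\|_{L^2}\le \|k[\lambda_1]-k[\lambda_2]\|_{C(\Omega)}$. By \eqref{eq:k_lipschitz} this is at most $C_{\rm ker}\|\lambda_1-\lambda_2\|_\infty$, and $\|\hat u_1(\cdot,t)\|_{L^2}\le C^0_{\rm obs}$ by \eqref{eq:observer_uniform_bound}. Hence the first term is bounded by $C_{\rm ker}C^0_{\rm obs}\|\lambda_1-\lambda_2\|_\infty$.

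For the second term, Cauchy--Schwarz together with the uniform kernel bound \eqref{eq:kernel_uniform_bound} gives the factor $\|k[\lambda_2](1,\cdot)\|_{L^2}\le C^0_{\rm ker}$. The remaining factor $\|\hat u_1(\cdot,t)-\hat u_2(\cdot,t)\|_{L^2}$ is at most $\|\hat u_1-\hat u_2\|_{C([0,T];L^2)}$, which \eqref{eq:observer_lipschitz_final} bounds by $C_{\rm obs}(\|\lambda_1-\lambda_2\|_\infty+\|p_1-p_2\|_{C([0,T])})$.

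Adding the two bounds and taking the supremum over $t\in[0,T]$ yields \eqref{eq:T_lipschitz_estimate} with
\[
C_{\mathcal T}:=C_{\rm ker}C^0_{\rm obs}+C^0_{\rm ker}C_{\rm obs}.
\]
This constant depends only on $T$, $B_\lambda$, $B_p$ and $\|\hat u_0\|_{L^2}$, as the statement requires. Two side remarks complete the argument. First, $t\mapsto\mathcal T[\lambda,p](t)$ is indeed continuous, so that the $C([0,T])$ norm makes sense; this holds because $\hat u\in C([0,T];L^2)$ and $k[\lambda](1,\cdot)\in L^2$. Second, $\mathcal T$ evaluated at time $t$ uses only $p_{[0,t]}$, by causality of the observer, which is consistent with taking the full signal $p_i\in\mathcal K_p$ as input.

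The only substantive analytic input is Lemma~\ref{lem:observer_lipschitz}, which is assumed here. The one point to be careful about is that $C^0_{\rm obs}$ must be uniform over $\mathcal K_\lambda\times\mathcal K_p$, not merely valid for the two given pairs. Otherwise $C_{\mathcal T}$ would depend on the particular $\lambda_i,p_i$. I would check this by reading \eqref{eq:observer_uniform_bound} as a bound holding for all admissible pairs, which its proof supports.
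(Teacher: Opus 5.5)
Your proposal is correct and is essentially the paper's proof. It uses the same add-and-subtract split of $\mathcal T[\lambda_1,p_1](t)-\mathcal T[\lambda_2,p_2](t)$, Cauchy--Schwarz, Lemmas~\ref{lem:kernel_lipschitz} and \ref{lem:observer_lipschitz}, and arrives at the same constant $C_{\mathcal T}=C_{\rm ker}C^0_{\rm obs}+C^0_{\rm ker}C_{\rm obs}$; you also spell out points the paper leaves implicit, namely that on the unit interval the $L^2$ trace norm is dominated by the $C(\Omega)$ norm and that $C^0_{\rm obs}$ must be uniform over $\mathcal K_\lambda\times\mathcal K_p$.
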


\begin{proof}
Define    $ U_i(t):=\mathcal T[\lambda_i,p_i](t)$. 
Then
\begin{align}
    U_1(t)-&U_2(t)
    =
    \int_0^1
    \big(k[\lambda_1](1,y)-k[\lambda_2](1,y)\big)
    \hat u_1(y,t)\,dy
    \notag\\
    &\quad+
    \int_0^1
    k[\lambda_2](1,y)
    \big(\hat u_1(y,t)-\hat u_2(y,t)\big)\,dy .
    \label{eq:U_difference_split}
\end{align}
Using the Cauchy-Schwarz inequality gives
\begin{align}
    |U_1(t)-&U_2(t)|
    \le
    \|k[\lambda_1](1,\cdot)-k[\lambda_2](1,\cdot)\|_{L^2}
    \|\hat u_1(\cdot,t)\|_{L^2}
    \notag\\
    &\quad+
    \|k[\lambda_2](1,\cdot)\|_{L^2}
    \|\hat u_1(\cdot,t)-\hat u_2(\cdot,t)\|_{L^2}.
    \label{eq:U_difference_bound}
\end{align}
Taking the supremum over $t\in[0,T]$, and using
Lemmas~\ref{lem:kernel_lipschitz} and \ref{lem:observer_lipschitz}, we obtain \eqref{eq:T_lipschitz_estimate} with
$ C_{\mathcal T}=C_{\rm ker}C^0_{\rm obs}+C^0_{\rm ker}C_{\rm obs}$.
\end{proof}

\section{DeepONet-LSTM Neural Operator Architecture}
\label{sec:deeponet_lstm}

\subsection{DeepONet-LSTM operator}

To approximate the output feedback causal operator
\eqref{eq:full_history_operator_def}, we introduce a hybrid DeepONet-LSTM
neural operator. 
A standard DeepONet is not directly suitable for approximating this causal operator, since it does
not explicitly encode the temporal dependence of the measurement history.
Therefore, we combine a
DeepONet-based encoder for the spatial coefficient with an LSTM module for
the temporal measurement history, as shown in Fig.~\ref{fig:net}.
\begin{figure*}[!htp]
\centering
\includegraphics[width=0.8\textwidth]{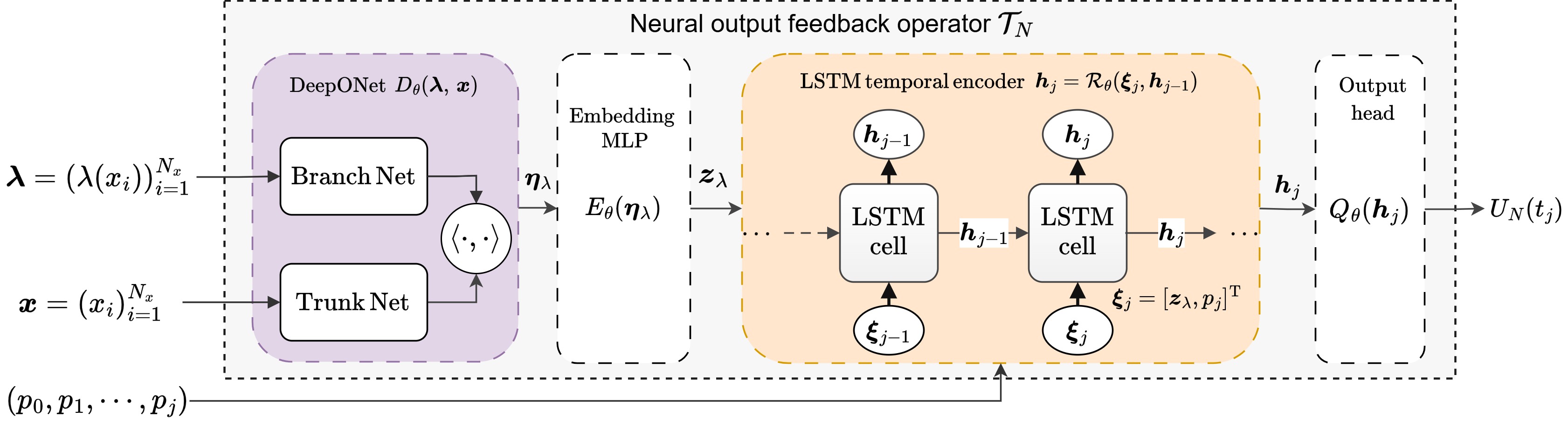}
\caption{Schematic of the proposed   DeepONet-LSTM architecture for the  output feedback operator.}
\label{fig:net}
\end{figure*}

The DeepONet component encodes the spatially varying reaction coefficient
$\lambda(x)$. Let $\{x_i\}_{i=1}^{N_x}\subset[0,1]$ be the spatial sensor
points and define
$
\boldsymbol\lambda
:=
\big(\lambda(x_1),\lambda(x_2),\ldots,\lambda(x_{N_x})\big).
$
The branch network takes $\boldsymbol\lambda$ as input, while the trunk
network takes the spatial coordinates.  Their outputs are combined in the standard DeepONet form to obtain 
\begin{equation}
    \boldsymbol \eta_\lambda = D_\theta(\boldsymbol\lambda) \in \mathbb R^{N_x}, 
\end{equation}where $D_\theta$ denotes the DeepONet encoder for the coefficient function.    The resulting DeepONet representation is then mapped by an embedding network to a finite dimensional coefficient embedding \begin{equation} \boldsymbol z_\lambda = E_\theta(\boldsymbol \eta_\lambda) \in \mathbb R^{d_\lambda}, \label{eq:lambda_embedding} \end{equation}
where $E_\theta$ denotes the embedding network.
Since $\lambda(x)$ is time invariant, $z_\lambda$ is also time invariant and
serves as a conditioning input for the temporal module.

The LSTM component is used to encode the causal dependence on the measured
boundary flux. In the discrete-time implementation, let
$t_j=j\Delta t$ and $p_j=p(t_j)$. At each time step, the coefficient embedding
$z_\lambda$ is concatenated with the current measurement $p_j$ to form
\begin{equation}
\boldsymbol \xi_j=[\boldsymbol z_\lambda,p_j]^{\mathrm{T}} \in \mathbb R^{1+d_\lambda},
\label{eq:lstm_input}
\end{equation}
and the LSTM hidden state is updated by
\begin{equation}
	\boldsymbol h_j
	=
	\mathcal R_\theta(\boldsymbol \xi_j,\boldsymbol h_{j-1}),
	\qquad j=0,1,\ldots,N_T,
	\label{eq:lstm_update}
\end{equation}
where the initial state is set to zero.
The neural boundary control input is then generated from the
hidden state by an output network:
\begin{equation}
U_N(t_j)
=
Q_\theta(\boldsymbol h_j) .
\label{eq:lstm_output_control}
\end{equation}
Combining the coefficient encoder, the LSTM recursion, and the output
network yields the discrete-time neural operator representation
\begin{equation}
U_N(t_j)
=
\mathcal T_N[\lambda,p_{[0,t_j]}](t_j),
\label{eq:deeponet_lstm_discrete_operator}
\end{equation}
where  $
\mathcal T_N:
(\lambda,p_{[0,t]})\mapsto U_N(t)$ denotes the DeepONet-LSTM neural operator.
Thus, although only the current measurement $p_j$ is explicitly fed into the
LSTM at time $t_j$, the recursively updated hidden state $h_j$ encodes the history $p_{[0,t_j]}$. 

\begin{theorem}[Approximation of the causal operator]
\label{thm:of_causal_operator_approx}
Let $\mathcal T$ be the output feedback causal operator defined in
\eqref{eq:full_history_operator_def}. For any $\varepsilon>0$, there exists
a DeepONet-LSTM neural operator $\mathcal T_N$ defined in
\eqref{eq:deeponet_lstm_discrete_operator} such that
\begin{equation}
\sup_{(\lambda,p)\in\mathcal K_\lambda\times\mathcal K_p}
|\mathcal T[\lambda,p]-\mathcal T_N[\lambda,p]|_{C([0,T])}
<\varepsilon .
\label{eq:of_operator_approx_error}
\end{equation}
Consequently, for every admissible pair $(\lambda,p)$ and every
$t\in[0,T]$,
\begin{equation}
\left|
\mathcal T[\lambda,p](t)
-
\mathcal T_N[\lambda,p](t)
\right|
<\varepsilon .
\label{eq:of_operator_pointwise_error}
\end{equation}
\end{theorem}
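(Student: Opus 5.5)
We prove the theorem by a standard compactness-plus-universal-approximation argument, using Proposition~\ref{prop:T_lipschitz} as the key regularity input. There is one technical point: $\mathcal K_\lambda$ and $\mathcal K_p$ as defined are bounded but not compact in $C([0,1])$ and $C([0,T])$. We therefore interpret the admissible sets as compact subsets, for example by adding a uniform modulus of continuity (equicontinuity), so that Arzel\`a--Ascoli applies. On such compact sets $\mathcal T$ is uniformly continuous, with the Lipschitz constant $C_{\mathcal T}$ from \eqref{eq:T_lipschitz_estimate}.

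The first step is a finite-dimensional reduction in space. Choose sensor points $\{x_i\}_{i=1}^{N_x}$ and let $\Pi_{N_x}\lambda$ be the piecewise-linear interpolant of $\boldsymbol\lambda$. By equicontinuity, $\sup_{\mathcal K_\lambda}\|\lambda-\Pi_{N_x}\lambda\|_\infty\to0$. Proposition~\ref{prop:T_lipschitz} then gives $|\mathcal T[\lambda,p]-\mathcal T[\Pi_{N_x}\lambda,p]|\le C_{\mathcal T}\|\lambda-\Pi_{N_x}\lambda\|_\infty<\varepsilon/4$ for $N_x$ large. The second step is a finite-dimensional reduction in time, carried out the same way. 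Take $t_j=j\Delta t$ and the piecewise-linear interpolant $\Pi_{\Delta t}p$. Equicontinuity and the Lipschitz bound in $p$ give an error below $\varepsilon/4$.

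Causality is essential here. Because of it, $\mathcal T[\lambda,\Pi_{\Delta t}p](t_j)$ depends only on $(\boldsymbol\lambda,p_0,\dots,p_j)$. Between grid points, $\mathcal T[\lambda,p](\cdot)$ is uniformly continuous in $t$, uniformly over the compact set, since $\hat u\in C([0,T];L^2)$. Hence controlling grid values plus a small $\Delta t$ controls the full $C([0,T])$ norm, provided $\mathcal T_N$ is extended between grid points (for example piecewise linearly). After these two steps the target is a finite family of continuous causal maps $F_j:(\boldsymbol\lambda,p_0,\dots,p_j)\mapsto U(t_j)$, $j\le N_T$, defined on a compact subset of a Euclidean space.

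The third step realizes the $F_j$ by the architecture. The observer discretized on the grid is itself a recurrent system: its state update $\hat u_{j}=\Phi_\lambda(\hat u_{j-1},p_j)$ and output $U_j=\langle k[\lambda](1,\cdot),\hat u_j\rangle$ are continuous maps on compact sets. We then invoke the universal approximation theorem for DeepONet (Lu et al.) to approximate the coefficient map. The embedding $\boldsymbol z_\lambda$ can be chosen to carry $\boldsymbol\lambda$ itself, or the relevant kernel values, to arbitrary accuracy. Next we invoke universal approximation results for LSTMs, or more generally RNNs, on finite horizons with compact inputs. These state that any causal continuous sequence map on a finite horizon $N_T$ is uniformly approximable; equivalently, the hidden state $\boldsymbol h_j$ can emulate a finite-dimensional Galerkin truncation of $\hat u_j$, and $Q_\theta$ approximates the output functional. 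With these choices the error at grid points is below $\varepsilon/4$.

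The main obstacle is error propagation through the recursion. Per-step approximation errors of $\Phi_\lambda$ accumulate over $N_T$ steps. This is controlled because the horizon is finite and each step map is Lipschitz, with constants depending on $T$, $B_\lambda$, $B_p$ as in Lemma~\ref{lem:observer_lipschitz}. A discrete Gronwall bound then shows that per-step accuracy $\delta$ yields total error $\le C_T N_T\delta$, and we choose $\delta$ accordingly.

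Summing the four contributions gives \eqref{eq:of_operator_approx_error}. Statement \eqref{eq:of_operator_pointwise_error} is immediate from the definition of the sup norm.
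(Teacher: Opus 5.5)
Your argument has the same backbone as the paper's proof: continuity of $\mathcal T$ from Proposition~\ref{prop:T_lipschitz}, then universal approximation for DeepONet and for recurrent networks on a finite horizon. It is, however, substantially more careful, and the differences matter.

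\begin{itemize}
\item The paper applies the cited approximation theorems directly on $\mathcal K_\lambda\times\mathcal K_p$. These are closed balls in infinite-dimensional spaces, so they are not compact, yet those theorems require compact input sets.
\item Your restriction to equicontinuous (compact) admissible sets is therefore necessary, not just convenient. $\mathcal T_N$ sees only $\boldsymbol\lambda$ and the samples $p_j$. Two inputs that agree at the sensors and grid points but differ in between (say, $p$ vanishing at every $t_j$ but close to $B_p$ elsewhere) get identical network outputs, while $\mathcal T$ generally separates them. So the literal supremum over the full balls is not attainable in general.
\item You use $C_{\mathcal T}$ quantitatively rather than only for continuity. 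Piecewise-linear interpolation preserves the sup bounds, so \eqref{eq:T_lipschitz_estimate} directly controls the sampling errors in $\lambda$ and $p$.
\item You reduce to finitely many continuous causal maps $F_j(\boldsymbol\lambda,p_0,\dots,p_j)$. You let $\boldsymbol z_\lambda$ carry an approximately injective encoding of $\boldsymbol\lambda$, so one finite-horizon recurrent approximation handles the non-separable joint dependence on $(\lambda,p)$. The paper's ``DeepONet for $\lambda$, LSTM for $p$, then compose'' sentence leaves this unexplained.
\item You explain how $\mathcal T_N$, defined only at the $t_j$, is extended so that a $C([0,T])$ bound makes sense.
\end{itemize}

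The paper's version buys brevity. Yours actually proves a correct, slightly reformulated statement.

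Some minor points:
\begin{itemize}
\item The uniform-in-$(\lambda,p)$ equicontinuity of $t\mapsto\mathcal T[\lambda,p](t)$ does not follow from $\hat u\in C([0,T];L^2)$ alone. It follows because $\mathcal T$ maps your compact set continuously into $C([0,T])$, so the image is compact and hence equicontinuous by Arzel\`a--Ascoli.
\item If you invoke the finite-horizon approximation theorem for causal sequence maps directly, the Gronwall paragraph is unnecessary.
\item If you instead keep the observer-emulation route, three adjustments are needed. Under piecewise-linear interpolation the one-step update depends on $(\hat u_{j-1},p_{j-1},p_j)$, not just $(\hat u_{j-1},p_j)$. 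The per-step Lipschitz constant comes from the observer semigroup bound, not from Lemma~\ref{lem:observer_lipschitz}. The $C_T N_T\delta$ accumulation should come from a discrete Duhamel sum (or an equivalent quasi-contractive norm), because naively iterating a step constant $Me^{\omega\Delta t}$ with $M>1$ gives $M^{N_T}$ growth.
\end{itemize}
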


\begin{proof}
By Proposition~\ref{prop:T_lipschitz}, the operator $\mathcal T$ is a
single-valued Lipschitz continuous operator on
$\mathcal K_\lambda\times\mathcal K_p$, and hence is continuous on this
admissible input set. The DeepONet component approximates the dependence on
the spatial function $\lambda(x)$, while the LSTM component approximates the
causal dependence on the measurement history $p_{[0,t]}$. By the universal
approximation property of DeepONet for continuous operators
\cite{chen1995universal,lu2021learning} and   recurrent neural networks for continuous causal maps
\cite{schafer2006recurrent}, the composed DeepONet-LSTM architecture can
approximate $\mathcal T$ uniformly on
$\mathcal K_\lambda\times\mathcal K_p$. Therefore, for any
$\varepsilon>0$, the network size can be chosen sufficiently large such that
\eqref{eq:of_operator_approx_error} holds. The pointwise estimate
\eqref{eq:of_operator_pointwise_error} follows directly from the
$C([0,T])$-norm bound.
\end{proof}


\subsection{Loss Function}
Before introducing the loss function, we first describe the training dataset. The dataset is denoted by
\begin{equation}
 \mathcal D=
\left\{
\left(\lambda^{(i)},p^{(i)},U^{(i)}\right)
\right\}_{i=1}^{N_s},   
\end{equation}
where $N_s$ is the number of training samples, $\lambda^{(i)}$ is the reaction coefficient, and $U^{(i)}(t_k)$ is the reference boundary control input defined by \eqref{eq:output_feedback_U}. The measured boundary signal $p^{(i)}$ is generated from the corresponding closed-loop response $u^{(i)}(x,t)$ under the controller \eqref{eq:output_feedback_U}, namely
\begin{equation}
p^{(i)}(t_k)=u_x^{(i)}(0,t_k),
\qquad k=0,\ldots,M-1 .
\end{equation}
In the implementation, the sampled signals are normalized using the statistics of the training dataset.


To reduce rapid temporal variations and large amplitudes in the learned boundary input, we introduce the modified loss
\begin{equation}
\mathcal L_{\mathrm{mod}}=
\mathcal L_U
+
\alpha \mathcal L_{\Delta U}
+
\beta \mathcal L_{\mathrm{amp}},
\label{eq:loss_modified}
\end{equation}
where $\alpha\geq0$ and $\beta\geq0$ are weighting coefficients. The first term is the standard mean squared error loss,
\begin{equation}
\mathcal L_U=
\frac{1}{N_s M}
\sum_{i=1}^{N_s}
\sum_{k=0}^{M-1}
\left|
U_N^{(i)}(t_k)-U^{(i)}(t_k)
\right|^2 .
\label{eq:loss_mse}
\end{equation}
The second term penalizes rapid temporal variations between adjacent sampling instants:
\begin{equation}
\mathcal L_{\Delta U}\!=\!
\frac{1}{N_s(M-1)}
\sum_{i=1}^{N_s}
\sum_{k=1}^{M-1}
\left|
U_N^{(i)}(t_k) - U_N^{(i)}(t_{k-1})
\right|^2 .
\label{eq:loss_deltaU}
\end{equation}
The third term penalizes large control amplitudes:
\begin{equation}
\mathcal L_{\mathrm{amp}}=\frac{1}{N_s M}
\sum_{i=1}^{N_s}
\sum_{k=0}^{M-1}
\left|
U_N^{(i)}(t_k)
\right|^2 .
\label{eq:loss_amp}
\end{equation}

\section{Practical Stability under Neural Boundary Control}
\label{sec:stab}

This section analyzes the closed-loop stability when the exact output
feedback causal operator $\mathcal T[\lambda,p]$ is replaced by its neural
approximation $\mathcal T_N[\lambda,p]$. Define the induced boundary
approximation error by
\begin{equation}
    \Delta U(t)
    :=
    \mathcal T_N[\lambda,p](t)-\mathcal T[\lambda,p](t).
    \label{eq:Delta_U}
\end{equation}
Under the exact output feedback controller, the backstepping transformation
maps the closed-loop system into a target heat equation
with homogeneous boundary conditions.  When the neural boundary operator is used instead, the
only mismatch is introduced at the controlled boundary. Consequently, the
transformed target system is given by
\begin{align}
    w_t(x,t)&=w_{xx}(x,t),
    \qquad x\in(0,1),\ t>0, \label{eq:target_main}\\
    w(0,t)&=0, 
    \qquad w(1,t) =\Delta U(t). \label{eq:target_bc1}
\end{align}
When $\Delta U(t)\equiv 0$, this system coincides with the nominal target
system and is exponentially stable.

The following lemma gives a fading-memory estimate for the perturbed target
system. 
\begin{lemma}[Estimate for the target system]
\label{lem:stab_w}
Consider the target system \eqref{eq:target_main}-\eqref{eq:target_bc1}. Assume that $|\Delta U(t)|\le \varepsilon$.
Then there exist constants $C_1,~C_2>0$ and $\alpha_0>0$ such that, for all
$t\ge 0$,
\begin{equation}
\|w(\cdot,t)\|_{L^2}^2
\le
C_1 e^{-\alpha_0 t}\|w(\cdot,0)\|_{L^2}^2
+
C_2 \varepsilon^2 .
\label{eq:w_eps_bound}
\end{equation}
\end{lemma}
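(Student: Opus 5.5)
The idea is to use linearity to split the target system \eqref{eq:target_main}--\eqref{eq:target_bc1} into two parts. One part carries the initial data with homogeneous boundary conditions; the other starts from zero and carries the boundary perturbation $\Delta U$. Each part is then estimated through the Dirichlet sine expansion, which plays the role of the heat-kernel boundary convolution. The usual lifting $w=v+x\,\Delta U(t)$ would need $\Delta U'(t)$, which the neural operator does not control. So I would avoid it and work with the explicit boundary-convolution (mild) representation. Only $|\Delta U(t)|\le\varepsilon$ and measurability of $\Delta U$ are then used.

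\textbf{Step 1 (decomposition).} Write $w=w^h+w^b$. Here $w^h$ solves $w^h_t=w^h_{xx}$ with $w^h(0,t)=w^h(1,t)=0$ and $w^h(\cdot,0)=w(\cdot,0)$. The function $w^b$ solves $w^b_t=w^b_{xx}$ with $w^b(0,t)=0$, $w^b(1,t)=\Delta U(t)$ and $w^b(\cdot,0)=0$. For $w^h$ the standard energy estimate $\frac{d}{dt}\|w^h\|_{L^2}^2=-2\|w^h_x\|_{L^2}^2\le -2\pi^2\|w^h\|_{L^2}^2$, together with Poincar\'e's inequality, gives $\|w^h(\cdot,t)\|_{L^2}^2\le e^{-2\pi^2 t}\|w(\cdot,0)\|_{L^2}^2$.

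\textbf{Step 2 (boundary part via the heat kernel / sine series).} Expand $w^b(x,t)=\sum_{n\ge1}a_n(t)\sin(n\pi x)$. Testing the equation against $2\sin(n\pi x)$ and integrating by parts twice, with the boundary term coming from $x=1$, gives
\begin{equation}
a_n(t)=2n\pi(-1)^{n+1}\int_0^t e^{-n^2\pi^2(t-s)}\Delta U(s)\,ds .
\end{equation}
This is precisely the convolution of $\Delta U$ with the normal derivative of the Dirichlet heat kernel at $y=1$. I would take it as the definition of the mild solution, which is consistent with the admissible-boundary-input framework \cite{curtain1995infinite} used in Lemma~\ref{lem:observer_lipschitz}. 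Using $|\Delta U|\le\varepsilon$ yields $|a_n(t)|\le 2n\pi\varepsilon\int_0^\infty e^{-n^2\pi^2 r}dr=2\varepsilon/(n\pi)$, uniformly in $t\ge0$. Parseval then gives
\begin{equation}
\|w^b(\cdot,t)\|_{L^2}^2=\frac12\sum_{n\ge1}a_n(t)^2\le \frac{2\varepsilon^2}{\pi^2}\sum_{n\ge1}\frac1{n^2}=\frac{\varepsilon^2}{3}.
\end{equation}

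\textbf{Step 3 (combination).} From $\|w\|^2\le 2\|w^h\|^2+2\|w^b\|^2$ we obtain \eqref{eq:w_eps_bound} with $C_1=2$, $\alpha_0=2\pi^2$ and $C_2=2/3$.

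The main obstacle is Step 2. The coefficient factor $n$ grows, so the series must be controlled by the exact decay of $\int_0^\infty e^{-n^2\pi^2 r}dr$ and not by crude bounds. One also has to make sure that this series indeed represents the solution for a merely bounded (non-smooth) boundary input. If a pointwise justification is needed, I would approximate $\Delta U$ by smooth compatible inputs, for which the classical solution coincides with the series. The bound of Step 2 is uniform in the approximation, so it passes to the limit in $L^2$.
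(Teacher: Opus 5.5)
Your proof is correct. It shares the paper's skeleton: split off the homogeneous semigroup part, represent the boundary part through the sine expansion of the Dirichlet heat kernel, and combine with the factor $2$ to get $C_1=2$ and $\alpha_0=2\pi^2$. Your $a_n(t)$ are exactly the Fourier coefficients of the paper's convolution $\int_0^t P(\cdot,t-\tau)\Delta U(\tau)\,d\tau$, where $P(x,t)=-\partial_yG_D(x,1,t)$.

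Where you differ is the order in which the boundary part is estimated. The paper takes the $L^2(0,1)$ norm of the kernel first. From $\|P(\cdot,t)\|_{L^2}^2=2\pi^2\sum_n n^2e^{-2n^2\pi^2t}$ it asserts $\|P(\cdot,t)\|_{L^2}\le C_P t^{-3/4}e^{-\pi^2 t/2}$. It then integrates in time by Minkowski's inequality, using that $t^{-3/4}$ is integrable. This gives $C_2=2C_b^2$, with $C_b$ left implicit.

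You instead integrate in time mode by mode, obtaining $|a_n(t)|\le 2\varepsilon/(n\pi)$, and sum in $n$ only afterwards via Parseval. This has three consequences:
\begin{itemize}
\item it removes the need for the small-time asymptotics of the theta-type series, which the paper states without derivation;
\item it yields the explicit constant $C_2=2/3$;
\item it can only be sharper. With $c_n(r)=2n\pi(-1)^{n+1}e^{-n^2\pi^2 r}$, Minkowski's inequality gives $(\sum_n(\int_0^\infty|c_n(r)|\,dr)^2)^{1/2}\le\int_0^\infty(\sum_n c_n(r)^2)^{1/2}\,dr$.
\end{itemize}

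The paper's version, in turn, isolates a single kernel-norm estimate. That is the form that carries over when one has an integrable bound on the boundary kernel but no convenient per-mode formula.

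Two smaller points. Your energy/Poincar\'e bound for the homogeneous part is equivalent to the paper's spectral bound. Your remark on defining the mild solution by the series for a merely bounded, possibly non-compatible $\Delta U$ is more careful than the paper, which simply writes down the representation formula.
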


\begin{proof}
Denote by $G_D(x,y,t)$ the Dirichlet heat kernel on $(0,1)$:
\begin{equation}
G_D(x,y,t)
=
2\sum_{n=1}^{\infty}
e^{-n^2\pi^2 t}
\sin(n\pi x)\sin(n\pi y).
\end{equation}
The solution of \eqref{eq:target_main}-\eqref{eq:target_bc1} can be
represented as
\begin{equation} \nonumber
w 
= 
\int_0^1 \!G_D(x,y,t)w(y,0)dy 
  -\!
\int_0^t\!\!
\partial_y  G_D(x,1,t-\tau) \Delta U(\tau)d\tau .
\end{equation}
The first term is the homogeneous heat semigroup response. Since the first
eigenvalue of the Dirichlet Laplacian on $(0,1)$ is $\pi^2$, we have
\begin{equation}
\left\|
\int_0^1 G_D(\cdot,y,t)w(y,0)dy
\right\|_{L^2}^2
\le
e^{-2\pi^2 t}\|w(\cdot,0)\|_{L^2}^2 .
\label{eq:heat_kernel_initial_estimate}
\end{equation}

Let $P(x,t) =
 -\partial_y G_D(x,1,t)$. Then we obtain
\begin{equation}
P(x,t)
=
2\sum_{n=1}^{\infty}
n\pi (-1)^{n+1}
e^{-n^2\pi^2 t}
\sin(n\pi x).
\end{equation}
Using   Parseval's theorem, one has
\begin{equation}
\|P(\cdot,t)\|_{L^2}^2
=
2\pi^2
\sum_{n=1}^{\infty}
n^2 e^{-2n^2\pi^2 t}.
\end{equation}
Hence there exists a constant $C_P>0$ such that, for all $t>0$,
\begin{equation}
\|P(\cdot,t)\|_{L^2}
\le
C_P t^{-3/4}e^{-\pi^2 t/2}.
\label{eq:boundary_kernel_estimate}
\end{equation}
Therefore, by Minkowski's inequality and $| \Delta U(t)|\le\varepsilon$,
\begin{align}
&\left\|  
\int_0^t \!\! P(\cdot,t-\tau) \Delta U(\tau)d\tau
\right\|_{L^2}
\!\le  \!
\int_0^t \!\!
\|P(\cdot,t-\tau)\|_{L^2}| \Delta U(\tau)|d\tau
\nonumber\\
&\qquad \le
C_P\varepsilon
\int_0^t
(t-\tau)^{-3/4}
e^{-\pi^2(t-\tau)/2}d\tau
  \le
C_b\varepsilon,\label{eq:boundary_convolution_bound}
\end{align}
where $
    C_b   =
    C_P
    \int_0^\infty r^{-3/4}e^{-\pi^2 r/2}\,dr
    <\infty .
$


Combining \eqref{eq:heat_kernel_initial_estimate} and
\eqref{eq:boundary_convolution_bound} gives
\begin{align}
\|w(\cdot,t)\|_{L^2}^2
&\le
2e^{-2\pi^2 t}\|w(\cdot,0)\|_{L^2}^2
+
2C_b^2\varepsilon^2.
\end{align}
Thus, \eqref{eq:w_eps_bound} holds with $C_1=2$, $\alpha_0=2\pi^2$ and $C_2=2C_b^2$. 
\end{proof}

\begin{theorem}[Practical stability under neural control]
\label{thm:practical_stab}
Under the neural boundary control $U_N=\mathcal T_N[\lambda,p_{[0,t]}](t)$, the closed-loop system is
practically stable, that is, there exist constants
$M_1,\alpha_1,\kappa_1>0$ such that, for all $t\ge 0$,
\begin{equation}
\|u(\cdot,t)\|_{L^2}^2
\le
M_1e^{-\alpha_1 t}\|u(\cdot,0)\|_{L^2}^2
+
\kappa_1\varepsilon^2 .
\label{eq:stsb_practical_fading}
\end{equation}
Consequently,
\begin{equation}
\limsup_{t\to\infty}\|u(\cdot,t)\|_{L^2}^2
\le
\kappa_1\varepsilon^2 .
\label{eq:ultimate_bound_corrected}
\end{equation}
\end{theorem}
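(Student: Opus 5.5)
We plan to prove Theorem~\ref{thm:practical_stab} by passing to backstepping coordinates, using the target estimate of Lemma~\ref{lem:stab_w}, and transforming back. We work with the plant state $u$ and the observer error $\tilde u:=u-\hat u$. Since the plant and observer receive the same boundary input $U_N(t)$, the error satisfies
\begin{equation}
\tilde u_t=\tilde u_{xx}+\lambda(x)\tilde u-g_1[\lambda](x)\tilde u_x(0,t),\qquad \tilde u(0,t)=\tilde u(1,t)=0 .
\end{equation}
This equation does not involve $\Delta U$. The observer transformation $\tilde u=\tilde w-\int_x^1 g[\lambda](x,y)\tilde w(y,t)\,dy$ maps it to the homogeneous heat equation. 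Hence $\|\tilde u(\cdot,t)\|_{L^2}^2\le M_0e^{-2\pi^2 t}\|\tilde u(\cdot,0)\|_{L^2}^2$, where $M_0$ depends only on the bounds on $g[\lambda]$ and its inverse kernel. These bounds are uniform on $\mathcal K_\lambda$ by Lemma~\ref{lem:kernel_lipschitz}.

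Next we apply the control transformation to the observer, $\hat w=\hat u-\int_0^x k[\lambda](x,y)\hat u(y,t)\,dy$. Because $U_N=\mathcal T[\lambda,p]+\Delta U$ and $\mathcal T[\lambda,p](t)=\int_0^1k[\lambda](1,y)\hat u(y,t)\,dy$, the boundary value is exactly $\hat w(1,t)=\Delta U(t)$. The $\hat w$-system is therefore the target system \eqref{eq:target_main}--\eqref{eq:target_bc1} with an extra distributed input $\big(g_1[\lambda](x)-\int_0^x k(x,y)g_1(y)\,dy\big)\tilde w_x(0,t)$, which is driven by the exponentially decaying error. If one adopts the paper's simplified view, in which the closed loop is already the pure target system \eqref{eq:target_main}--\eqref{eq:target_bc1}, then Lemma~\ref{lem:stab_w} applies directly. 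In the full cascade, the plan is to add this input to the mild solution in the heat-kernel representation. Its contribution is $\int_0^t\|G_D(\cdot,\cdot,t-\tau)\|\,|\tilde w_x(0,\tau)|\,d\tau$. We would control it via the smoothing estimate $\|\tilde w_x(\cdot,\tau)\|_{L^\infty}\le C\tau^{-3/4}e^{-\pi^2\tau/2}\|\tilde w(\cdot,0)\|_{L^2}$, which is integrable near $0$ and yields a term of the form $Ce^{-\alpha t}\|\tilde w(\cdot,0)\|_{L^2}$.

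\textbf{Main obstacle.} The delicate point is this boundary-flux coupling, together with the observer initial condition. The bound \eqref{eq:stsb_practical_fading} involves only $\|u(\cdot,0)\|$, so we either take $\hat u_0=0$, giving $\tilde u(\cdot,0)=u(\cdot,0)$, or absorb $\|\hat u_0\|$ into the initial data. Granting this, we obtain
\begin{equation}
\|\hat w(\cdot,t)\|_{L^2}^2+\|\tilde w(\cdot,t)\|_{L^2}^2\le Ce^{-\alpha t}\big(\|\hat w(\cdot,0)\|_{L^2}^2+\|\tilde w(\cdot,0)\|_{L^2}^2\big)+C_2\varepsilon^2 .
\end{equation}

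Finally we invert the transformations. The inverse kernels $l[\lambda]$ and the kernels $k[\lambda]$, $g[\lambda]$ are bounded on $\Omega$ uniformly in $\lambda\in\mathcal K_\lambda$, by \eqref{eq:kernel_uniform_bound} and the same Volterra--Gronwall argument. Thus $\|\hat u\|\le(1+\|l\|_{C(\Omega)})\|\hat w\|$ and $\|\tilde u\|\le(1+\|g\|_{C(\Omega)})\|\tilde w\|$, with the analogous reverse bounds at $t=0$. Using $u=\hat u+\tilde u$ and $(a+b)^2\le 2a^2+2b^2$, we get \eqref{eq:stsb_practical_fading} with $\alpha_1=\min\{\alpha_0,\alpha\}$, $M_1$ built from the kernel norms and $C_1$, and $\kappa_1=2(1+\|l\|_{C(\Omega)})^2C_2$. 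Taking $\limsup_{t\to\infty}$ then gives \eqref{eq:ultimate_bound_corrected}.
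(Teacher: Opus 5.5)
Your route differs from the paper's. The paper's proof is exactly what you call the simplified view. It uses only the control transformation $w=u-\int_0^x k[\lambda](x,y)u(y,t)\,dy$, the norm equivalence $m_1\|u\|_{L^2}^2\le\|w\|_{L^2}^2\le m_2\|u\|_{L^2}^2$, and Lemma~\ref{lem:stab_w}. It takes for granted that the closed loop in $w$-coordinates is the heat equation with $w(1,t)=\Delta U(t)$. That gives a two-line proof, but it silently drops the observer. For the plant one actually has $w(1,t)=\Delta U(t)-\int_0^1k[\lambda](1,y)\tilde u(y,t)\,dy$, and $\hat u_0$ never appears. Your $(\hat w,\tilde w)$ cascade is the standard output-feedback argument and the right way to close this. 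Your remark that $\hat u_0=0$ is needed for a bound in terms of $\|u(\cdot,0)\|_{L^2}$ alone is a point the paper skips.

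\textbf{The gap: which observer.} In the paper, $\mathcal T[\lambda,p]$ is defined through the observer closed with its own nominal feedback (see the boundary condition for $e$ in the proof of Lemma~\ref{lem:observer_lipschitz}). Hence $\hat u_{\lambda,p}(1,t)=\int_0^1k[\lambda](1,y)\hat u_{\lambda,p}(y,t)\,dy=\mathcal T[\lambda,p](t)$, not $U_N(t)$, and the neural loop runs no observer at all. If you instead feed $U_N$ into an observer, then $\int_0^1k[\lambda](1,y)\hat u(y,t)\,dy\neq\mathcal T[\lambda,p](t)$, and your claim $\hat w(1,t)=\Delta U(t)$ fails. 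With the observer that actually defines $\mathcal T$ (driven by the true $p=u_x(0,\cdot)$), the roles swap: $\hat w(1,t)=0$, while $\tilde u(1,t)=U_N(t)-\mathcal T[\lambda,p](t)=\Delta U(t)$. So the error equation does involve $\Delta U$. In addition, the observer transformation must be $\tilde u=\tilde w-\int_0^x g[\lambda](x,y)\tilde w(y,t)\,dy$, because the kernel lives on $\Omega$. This is the form consistent with $g(1,\cdot)=0$, $g(x,x)=-\frac12\int_x^1\lambda(y)\,dy$ and $g_1=-g(\cdot,0)$, and it gives $\tilde u(1,t)=\tilde w(1,t)$ and $\tilde u_x(0,t)=\tilde w_x(0,t)$.

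\textbf{The repair stays inside your framework.} In the corrected picture, $\tilde w$ is precisely the system of Lemma~\ref{lem:stab_w}. $\hat w$ solves the heat equation with homogeneous Dirichlet data, driven by your distributed term $\big(g_1(x)-\int_0^xk(x,y)g_1(y)\,dy\big)\tilde w_x(0,t)$. The only new estimate is for $\tilde w_x(0,t)$. Besides your $\tau^{-3/4}$ initial-data part, it now contains $\int_0^t\partial_xP(0,t-\tau)\Delta U(\tau)\,d\tau$, with $P$ as in the proof of Lemma~\ref{lem:stab_w} and $\partial_xP(0,t)=2\pi^2\sum_{n\ge1}(-1)^{n+1}n^2e^{-n^2\pi^2t}$. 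This kernel is bounded; by Poisson summation it is $O(t^{-5/2}e^{-1/(4t)})$ as $t\to0^+$. It decays like $e^{-\pi^2t}$, so it lies in $L^1(0,\infty)$. It therefore adds $C\varepsilon$ to $|\tilde w_x(0,t)|$, and hence $C\varepsilon$ to $\|\hat w(\cdot,t)\|_{L^2}$. Your inversion step then goes through unchanged, except that $\kappa_1$ collects $\varepsilon^2$ contributions from both $\hat w$ and $\tilde w$.
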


\begin{proof}
Since the Volterra backstepping transformation and its inverse are bounded
on $L^2(0,1)$, there exist constants $m_1,m_2>0$ such that
\begin{equation}
m_1\|u(\cdot,t)\|_{L^2}^2
\le
\|w(\cdot,t)\|_{L^2}^2
\le
m_2\|u(\cdot,t)\|_{L^2}^2 .
\label{eq:norm_equivalence}
\end{equation}
Combining \eqref{eq:norm_equivalence} and \eqref{eq:w_eps_bound} in Lemma~\ref{lem:stab_w} gives
\begin{align}
\|u(\cdot,t)\|_{L^2}^2
&\le
\frac{C_1m_2}{m_1}e^{-\alpha_0 t}
\|u(\cdot,0)\|_{L^2}^2
+
\frac{C_2}{m_1}\varepsilon^2 .
\end{align}
Therefore, \eqref{eq:stsb_practical_fading} holds by setting
\begin{equation}
M_1=\frac{C_1 m_2}{m_1},
\qquad
\alpha_1=\alpha_0,
\qquad
\kappa_1=\frac{C_2}{m_1}.
\end{equation}
Taking the limit superior as $t\to\infty$ yields
\eqref{eq:ultimate_bound_corrected}.
\end{proof}

\section{Simulation}
\label{sec:simulation}
\subsection{Training Settings}
The training dataset is generated by simulating the nominal output feedback backstepping closed-loop system under different reaction coefficients and initial conditions. We sample $1000$ reaction coefficients of the form $\lambda(x) = 50 \cos(\nu  \arccos(x))$, where $\nu$ follows the uniform distribution on $[4,\,9]$. For each reaction coefficient, we generate $10$ smooth initial conditions satisfying $u(0,0)=u(1,0)=0$, and record a $T=2\,\mathrm{s}$ closed-loop response for each pair of reaction coefficient and initial condition. This gives $N_s=10000$ closed-loop response trajectories in total, forming the dataset $\mathcal D$.  
All variables are normalized using the statistics computed from the training set. During training, the complete sequence is fed into the LSTM. The same network architecture is trained with either the MSE loss
or the modified loss, and both trained models are evaluated using the same
data split. In the modified loss, the weights are chosen as $\alpha=5\times 10^{-3}$ and $\beta=10^{-7}$. The proposed neural operator contains approximately $2.38\times 10^{5}$ trainable parameters.  The two models are trained separately on a single RTX 5090 GPU for up to $1000$ epochs, and each training run takes approximately $10$ min. Fig.~\ref{fig:loss} presents the training and validation loss histories under the modified loss, showing that the proposed DeepONet-LSTM architecture can be trained stably. The training code is available on 
\href{https://github.com/JingZhang-JZ/deeponet-lstm-rd-output-feedback}{GitHub}.
\begin{figure}[!htp]
    \centering
\includegraphics[width=0.75\linewidth]{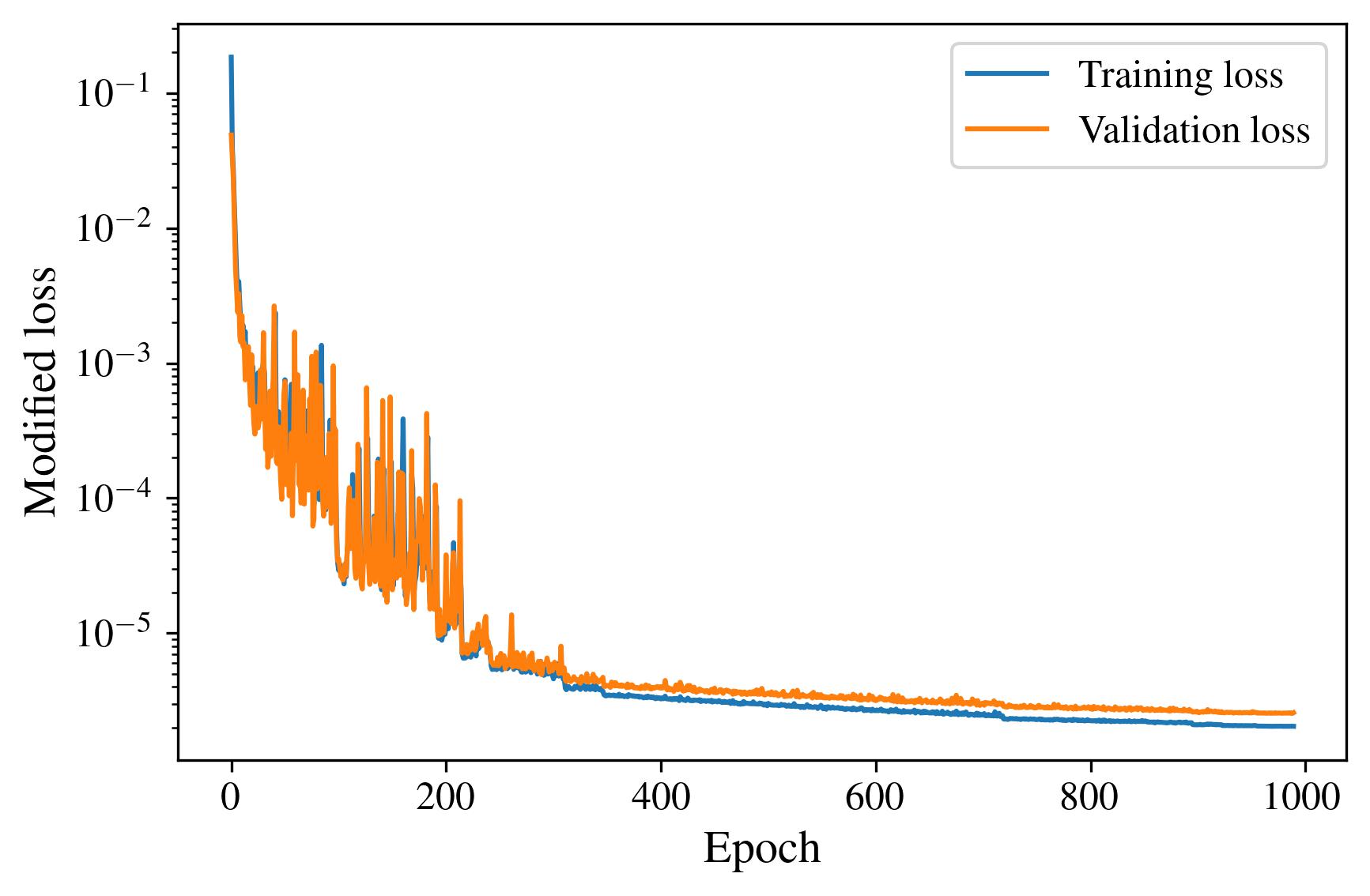}\\
    \caption{Training and validation loss histories under the modified loss function.}
    \label{fig:loss}
\end{figure} 

\subsection{Simulation results}
%
To evaluate the proposed neural output feedback operator, we consider the
test case $\nu =6.5$.  Without boundary control, the open-loop response diverges for the tested coefficient.
Fig.~\ref{fig:closed_loop_3d_comparison} compares the closed-loop responses obtained by the backstepping controller and the two neural  controllers. The responses $u_{\mathrm{bs}}$, $u_{\scriptscriptstyle \mathrm{MSE}}$, and $u_{\mathrm{mod}}$ are shown in Fig.~\ref{fig:closed_loop_3d_comparison} (a)-(c), and the corresponding
differences from the backstepping response are shown in
Fig.~\ref{fig:closed_loop_3d_comparison}(d) and (e). Both neural responses remain close to the backstepping response, while the modified loss yields smaller deviations.

\begin{figure*}[!htp]
    \centering
    \begin{tabular}{ccc}
        \includegraphics[width=0.32\textwidth,trim=0.1cm 0.9cm 0.7cm 2cm,clip]{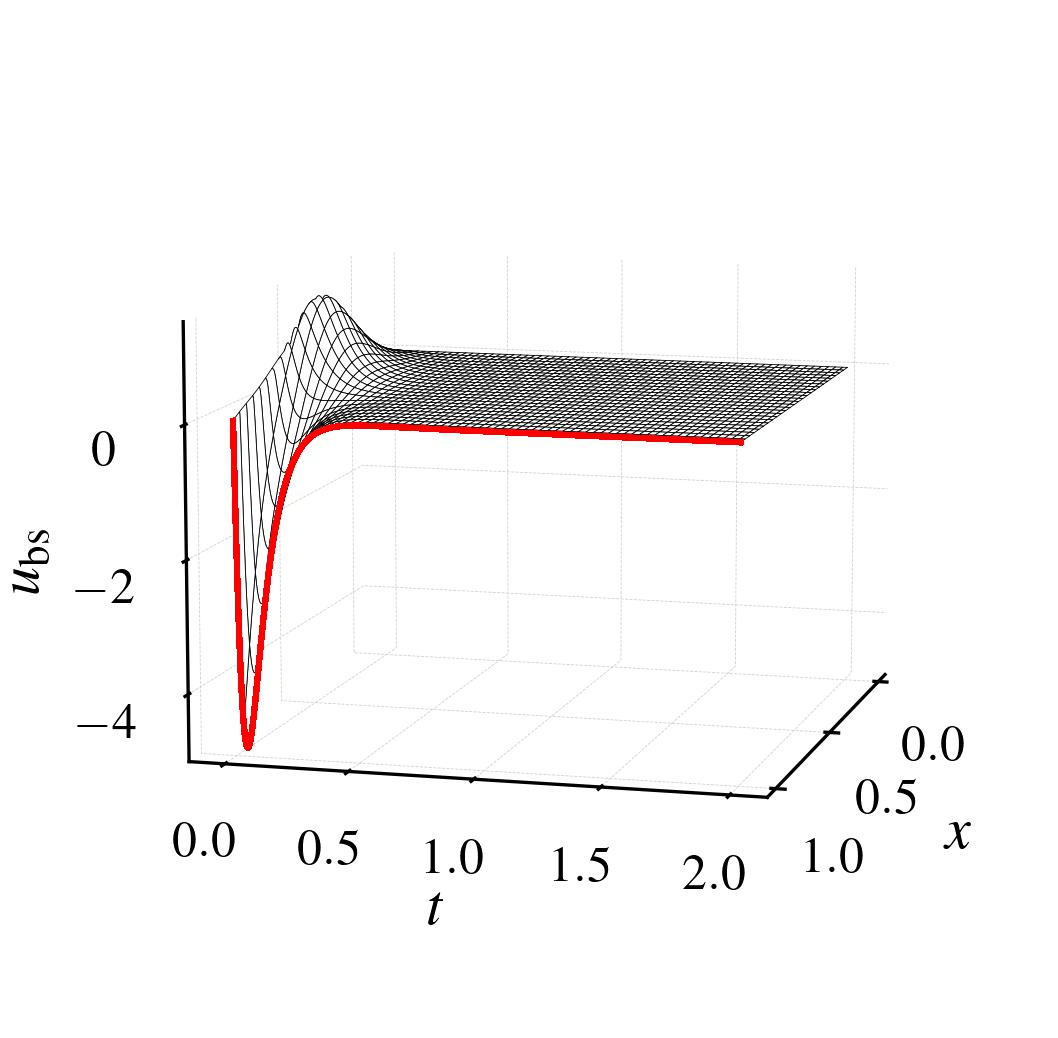}
        &
        \includegraphics[width=0.32\textwidth,trim=0.1cm 0.9cm 0.7cm 2cm,clip]{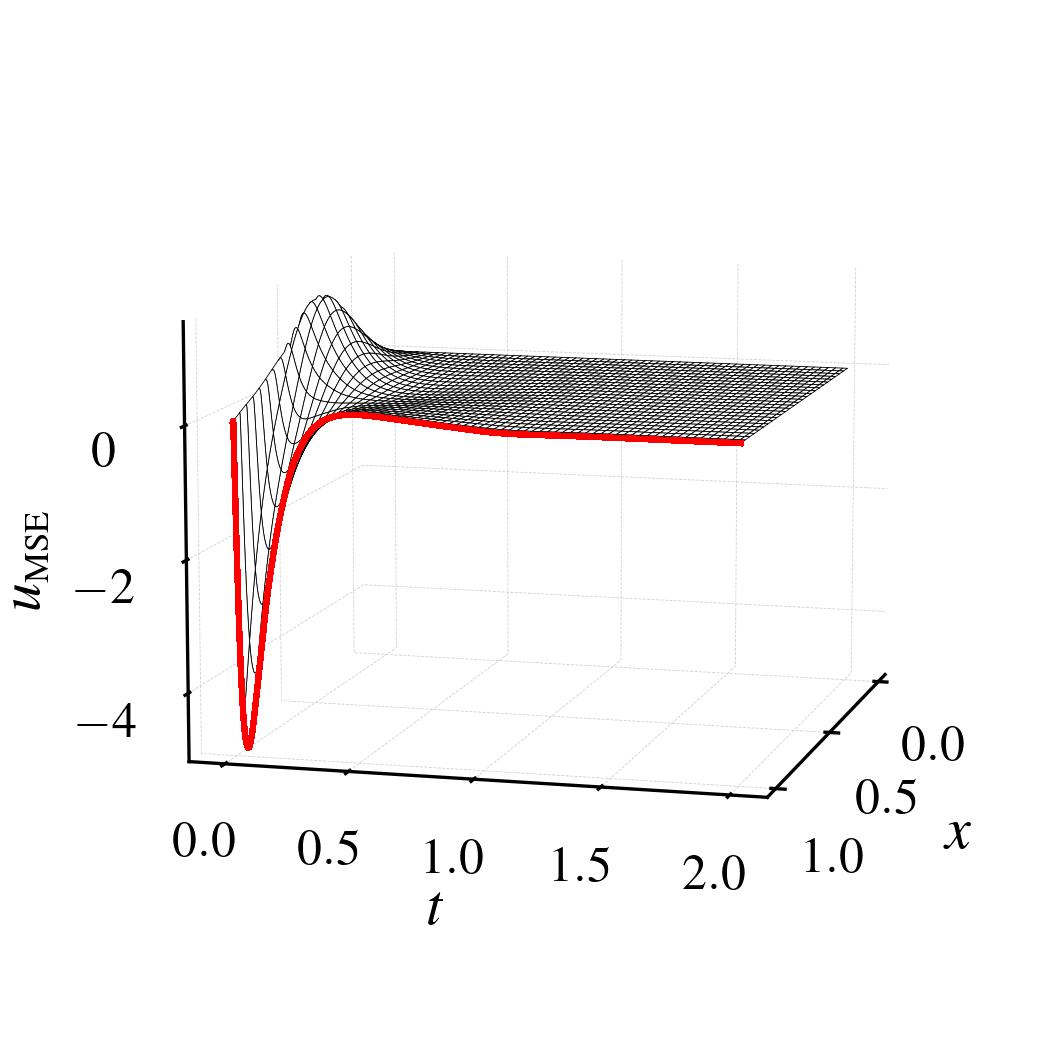}
        &
        \includegraphics[width=0.32\textwidth,trim=0.1cm 0.9cm 0.7cm 2cm,clip]{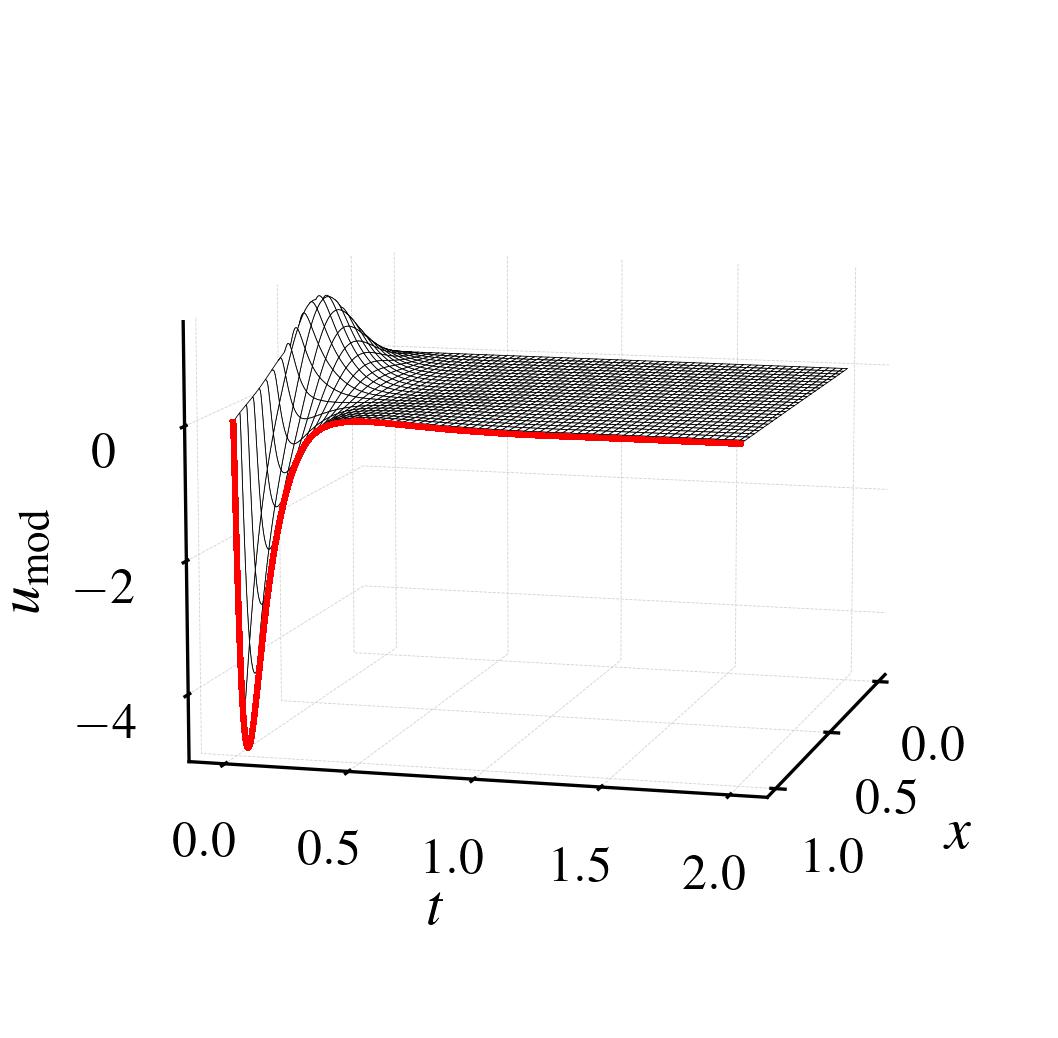}
        \\
        (a)  Response under backstepping $u_{\mathrm{bs}}$
        &
        (b) Response with MSE loss $u_{\scriptscriptstyle\mathrm{MSE}}$
        &
        (c) Response with modified loss $u_{\mathrm{mod}}$
    \end{tabular}
    \begin{tabular}{cc}
        \includegraphics[width=0.32\textwidth,trim=0.1cm 0.9cm 0.7cm 2cm,clip]{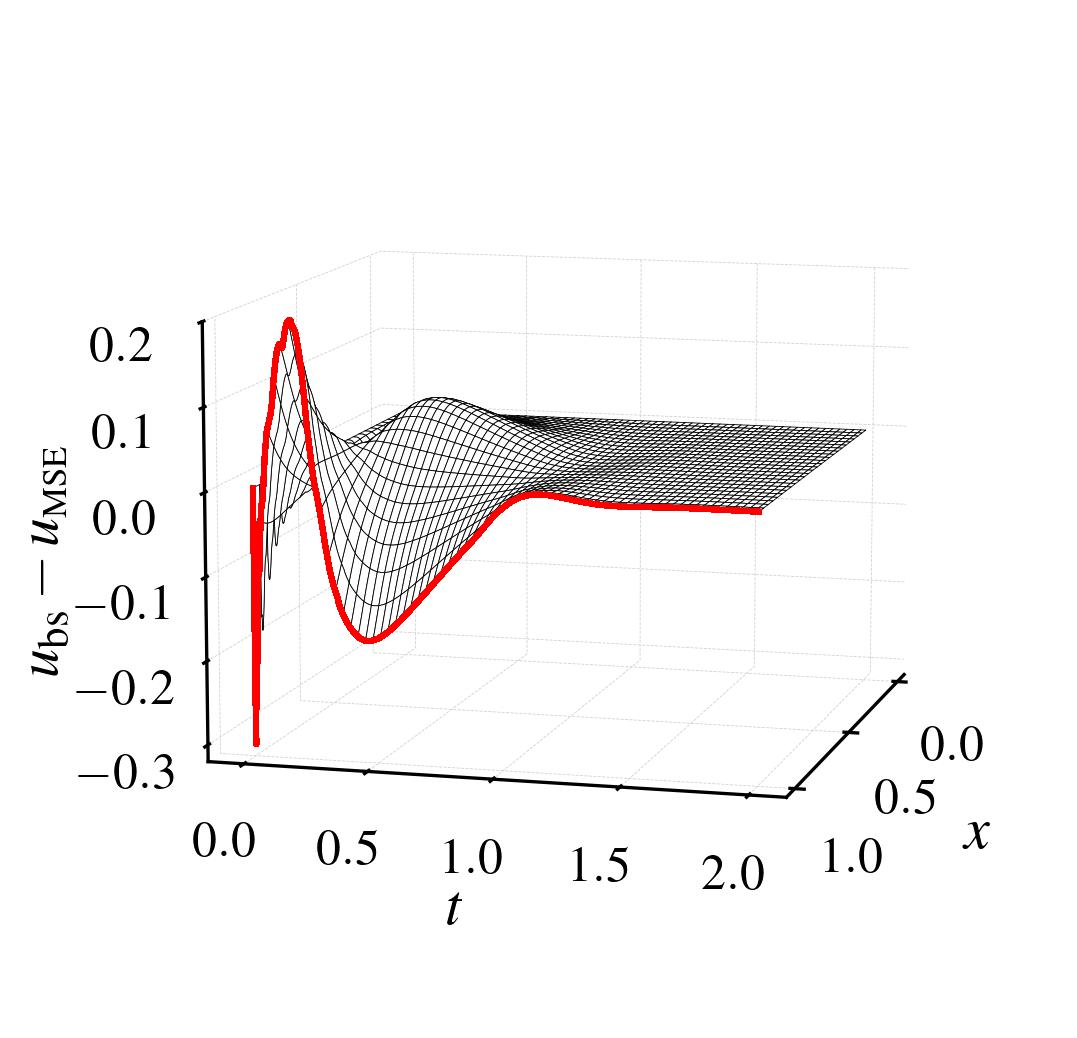}
        &
        \includegraphics[width=0.32\textwidth,trim=0.1cm 0.9cm 0.7cm 2cm,clip]{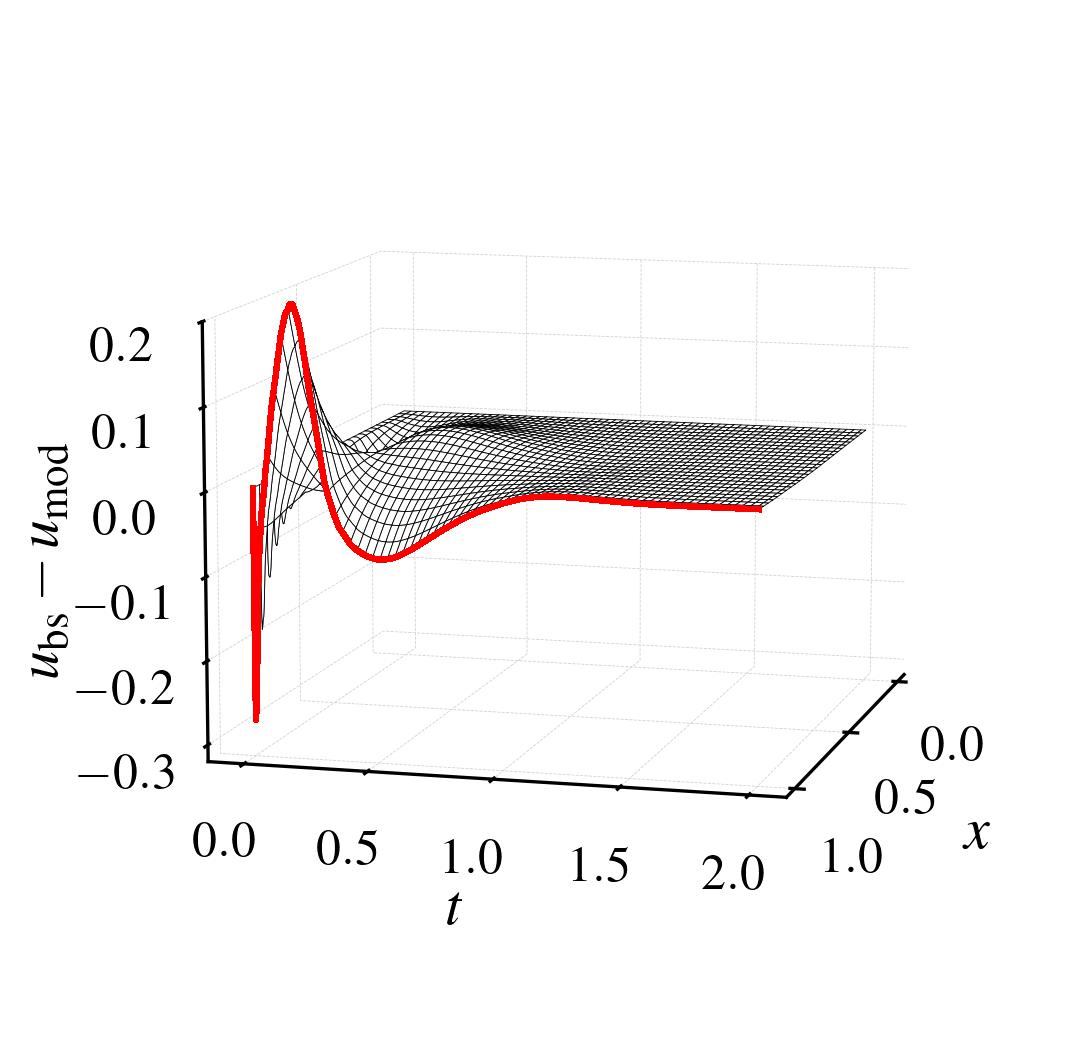}
        \\
        (d) Difference $u_{\mathrm{bs}}-u_{\scriptscriptstyle\mathrm{MSE}}$
        &
        (e) Difference $u_{\mathrm{bs}}-u_{\mathrm{mod}}$
    \end{tabular}
    \caption{Comparison of the closed-loop responses. (a)--(c) show the responses under the backstepping controller and the neural controllers trained with MSE and modified losses. (d) and (e) show the differences between the backstepping response and the two neural closed-loop responses.}
    \label{fig:closed_loop_3d_comparison}
\end{figure*} 
Fig.~\ref{fig:control_L2_comparison} compares the boundary control input and the $L^2$ norm of the closed-loop response. 
Both neural controllers follow
the backstepping input and stabilize the system. Compared with the MSE-loss
model, the modified-loss model produces a smoother control input, as shown
in the magnified inset, and a smoother $L^2$-norm trajectory.
\begin{figure*}[!htp]
    \centering
    \begin{tabular}{cc}
        \includegraphics[width=0.40\linewidth,trim=0.1cm 0.3cm 0.1cm 0.1cm,clip]{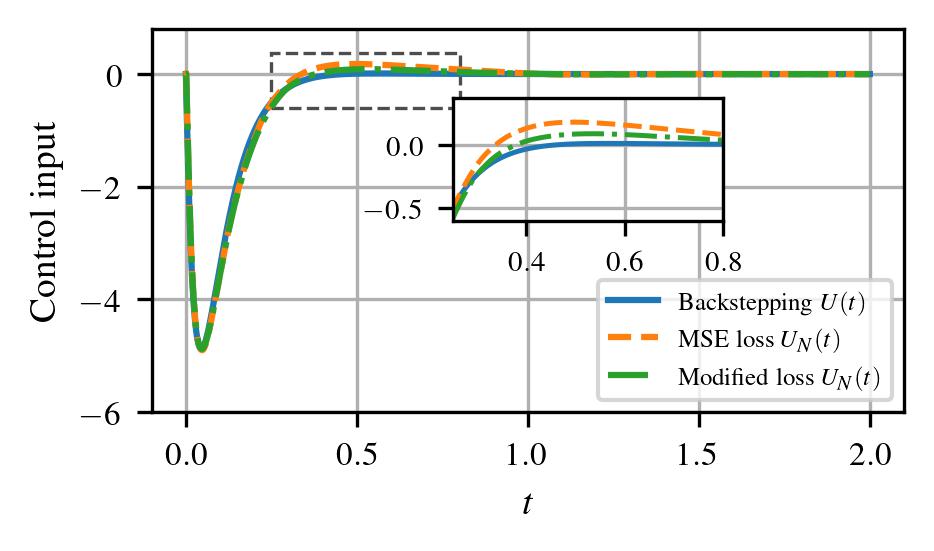}
        &
        \includegraphics[width=0.40\linewidth,trim=0.1cm 0.3cm 0.1cm 0.1cm,clip]{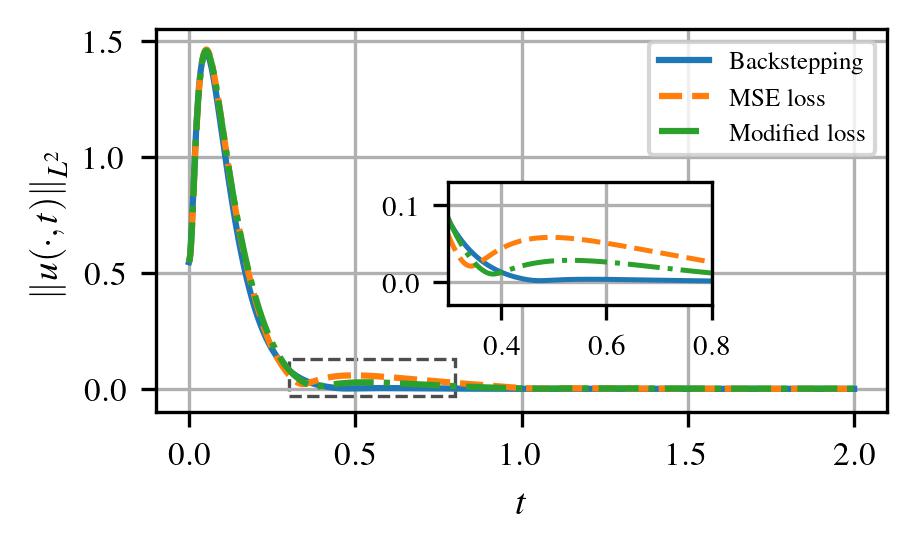}
        \\
        (a) Boundary control input
        &
        (b) Evolution of the $L^2$ norm of the closed-loop system.
    \end{tabular}
    \caption{Control input and closed-loop $L^2$ norm under the backstepping controller and neural controllers trained with MSE and modified losses.}
    \label{fig:control_L2_comparison}
\end{figure*}

Table~\ref{tab:performance} compares the closed-loop performance for three values of  $\nu$. 
For each neural controller $\star\in\{\mathrm{MSE},\mathrm{mod}\}$, let $u_\star$ and $U_\star$ denote the system state and control input.   The relative state and control errors are defined as 
\begin{equation} E_u^\star = \frac{\|u_\star-u_{\rm bs}\|_{L^2(Q_T)}} {\|u_{\rm bs}\|_{L^2(Q_T)}}, \qquad E_U^\star = \frac{\|U_\star-U_{\rm bs}\|_{L^2(I_T)}} {\|U_{\rm bs}\|_{L^2(I_T)}} , \nonumber \end{equation}
where $Q_T=(0,1)\times(0,T)$ and $I_T=(0,T)$.

\begin{table}[htbp]
\centering
\caption{Closed-loop performance and errors under different values of $\nu$}\label{tab:performance}
\begin{tabular}{ccccc}
\toprule
$\nu$ & controller 
& $\|u(\cdot,T)\|_{L^2}$ 
&  \makecell{relative\\state error $E_u$} 
& \makecell{relative\\control error $E_U$}\\
\midrule

\multirow{3}{*}{4.3}
  & Backstepping & 7.828e-07 & -- & -- \\
  & MSE loss     & 4.447e-02  & 2.533e-02 & 2.535e-02 \\
  & modified loss & \textbf{1.521e-03}   & \textbf{2.027e-02} & \textbf{2.040e-02} \\

\midrule

\multirow{3}{*}{6.5}
  & Backstepping & 3.779e-08 & -- & -- \\
  & MSE loss  & \textbf{1.014e-04} & 8.736e-02 & 8.430e-02   \\
  & modified loss &  8.233e-04 & \textbf{6.181e-02} & \textbf{6.107e-02} \\

\midrule

\multirow{3}{*}{8.8}
  & Backstepping & 1.041e-08 & -- & --\\
  & MSE loss    & 3.823e-04 & 4.083e-02 & 1.023e-01 \\
  & modified loss & \textbf{1.495e-04}   & \textbf{1.018e-02} &\textbf{4.738e-02}   \\

\bottomrule
\end{tabular}
\end{table}
The output-feedback backstepping controller is used as the baseline.  The bold values indicate the smaller value between the two neural controllers.  Although the modified loss does not yield the best value for every metric, it generally improves the approximation of the backstepping closed-loop dynamics.

\begin{rmk}
As an ablation study, we also trained a standard DeepONet to approximate an instantaneous map from the current measurement to the boundary control input, namely
$(\lambda,p(t))\mapsto U(t)$, 
without using the past measurement history. The resulting controller did not reliably stabilize the closed-loop system in our simulations. This observation is consistent with the causal nature of the observer-based output feedback law: the control input is determined by the observer state, which depends on the accumulated measurement history rather than only on the instantaneous value of $p(t)$.
\end{rmk}

\section{Conclusion}\label{sec:conc}
This paper proposes a DeepONet-LSTM neural operator for output feedback boundary control of reaction diffusion PDEs. The method learns a causal boundary operator from the reaction coefficient and the boundary measurement history to the control input, avoiding the online computation of the backstepping controller and observer kernels. We prove the Lipschitz continuity of the output feedback operator and establish practical stability of the closed-loop system under bounded neural approximation error. Simulations show that the learned controller stabilizes the system and produces responses close to those of the backstepping controller. Future work could extend the proposed framework to other classes of PDE systems, such as hyperbolic PDEs and high-dimensional PDEs.


\printbibliography

\end{document}